\newtheorem{proposition}{Proposition}
\DeclareMathOperator*{\argmax}{arg\,max}
\newcommand{\mb}[1]{\mathbf{#1}}
\newcommand{\mc}[1]{\mathcal{#1}}
\newcommand{\mbb}[1]{\mathbb{#1}}
\newcommand{\mr}[1]{\mathrm{#1}}
\newlength{\figWidth}
\newcommand*\pFq[6][8]{%
	\begingroup % only local assignments
	\pFqmuskip=#1mu\relax
	\mathchardef\normalcomma=\mathcode`,
	% make the comma math active
	\mathcode`\,=\string"8000
	% and define it to be \pFqcomma
	\begingroup\lccode`\~=`\,
	\lowercase{\endgroup\let~}\pFqcomma
	% typeset the formula
	{}_{#2}F_{#3}{\left[\genfrac..{0pt}{}{#4}{#5};#6\right]}%
	\endgroup
}
\newcommand{\pFqcomma}{{\normalcomma}\mskip\pFqmuskip}
\begin{document}
%
% paper title
% Titles are generally capitalized except for words such as a, an, and, as,
% at, but, by, for, in, nor, of, on, or, the, to and up, which are usually
% not capitalized unless they are the first or last word of the title.
% Linebreaks \\ can be used within to get better formatting as desired.
% Do not put math or special symbols in the title.
\title{Secure UAV-to-Ground MIMO Communications: Joint Transceiver and Location Optimization}
%
%
% author names and IEEE memberships
% note positions of commas and nonbreaking spaces ( ~ ) LaTeX will not break
% a structure at a ~ so this keeps an author's name from being broken across
% two lines.
% use \thanks{} to gain access to the first footnote area
% a separate \thanks must be used for each paragraph as LaTeX2e's \thanks
% was not built to handle multiple paragraphs
%

\author{Zhong~Zheng,~\IEEEmembership{Member,~IEEE}, Xinyao Wang, Zesong Fei,~\IEEEmembership{Senior Member,~IEEE}, Qingqing Wu,~\IEEEmembership{Member,~IEEE}, Bin Li,~\IEEEmembership{Member,~IEEE}, and Lajos Hanzo,~\IEEEmembership{Fellow, IEEE}
        % <-this % stops a space
\thanks{Copyright (c) 2015 IEEE. Personal use of this material is permitted. However, permission to use this material for any other purposes must be obtained from the IEEE by sending a request to pubs-permissions@ieee.org.
}

\thanks{Z. Zheng, X. Wang, and Z. Fei are with the School of Information and Electronics, Beijing Institute of Technology, Beijing 100081, China. (e-mail: \{zhong.zheng, 3120205414, feizesong\}@bit.edu.cn).
	
Q. Wu is with the State Key Laboratory of Internet of Things for Smart City, University of Macau, Macau, 999078, China (e-mail: qingqingwu@um.edu.mo).

B. Li is with the School of Computer and Software, Nanjing University of Information Science and Technology, Nanjing 210044, China, and also with the Guangxi Key Laboratory of Multimedia Communications and Network Technology, Nanning 530004, China (e-mail: bin.li@nuist.edu.cn).

L. Hanzo is with the School of Electronics and Computer Science, University of Southampton, Southampton, Hants SO17 1BJ, U.K. (e-mail: lh@ecs.soton.ac.uk).}}

% note the % following the last \IEEEmembership and also \thanks -
% these prevent an unwanted space from occurring between the last author name
% and the end of the author line. i.e., if you had this:
%
% \author{....lastname \thanks{...} \thanks{...} }
%                     ^------------^------------^----Do not want these spaces!
%
% a space would be appended to the last name and could cause every name on that
% line to be shifted left slightly. This is one of those "LaTeX things". For
% instance, "\textbf{A} \textbf{B}" will typeset as "A B" not "AB". To get
% "AB" then you have to do: "\textbf{A}\textbf{B}"
% \thanks is no different in this regard, so shield the last } of each \thanks
% that ends a line with a % and do not let a space in before the next \thanks.
% Spaces after \IEEEmembership other than the last one are OK (and needed) as
% you are supposed to have spaces between the names. For what it is worth,
% this is a minor point as most people would not even notice if the said evil
% space somehow managed to creep in.

% The paper headers
\markboth{}
{}
% The only time the second header will appear is for the odd numbered pages
% after the title page when using the twoside option.
%
% *** Note that you probably will NOT want to include the author's ***
% *** name in the headers of peer review papers.                   ***
% You can use \ifCLASSOPTIONpeerreview for conditional compilation here if
% you desire.

% If you want to put a publisher's ID mark on the page you can do it like
% this:
%\IEEEpubid{0000--0000/00\$00.00~\copyright~2015 IEEE}
% Remember, if you use this you must call \IEEEpubidadjcol in the second
% column for its text to clear the IEEEpubid mark.

% use for special paper notices
%\IEEEspecialpapernotice{(Invited Paper)}

% make the title area
\maketitle

% As a general rule, do not put math, special symbols or citations
% in the abstract or keywords.
\begin{abstract}
Unmanned aerial vehicles~(UAVs) are foreseen to constitute promising airborne communication devices as a benefit of their superior channel quality. But UAV-to-ground (U2G) communications are vulnerable to eavesdropping. Hence, we conceive a sophisticated physical layer security solution for improving the secrecy rate of multi-antenna aided U2G systems. Explicitly, the secrecy rate of the U2G MIMO wiretap channels is derived by using random matrix theory. The resultant explicit expression is then applied in the joint optimization of the MIMO transceiver and the UAV location relying on an alternating optimization technique. Our numerical results show that the joint transceiver and location optimization conceived facilitates secure communications even in the challenging scenario, where the legitimate channel of confidential information is inferior to the eavesdropping channel.
\end{abstract}

% Note that keywords are not normally used for peerreview papers.
\begin{IEEEkeywords}
UAV; physical layer security; MIMO; random matrix theory; alternating optimization.
\end{IEEEkeywords}

\IEEEpeerreviewmaketitle

\section{Introduction}

\subsection{Backgrounds}

With the emergence of novel communication paradigms and scenarios, communication networks are undergoing significant changes with new technologies and network components introduced~\cite{Hanzoinvited}. Among them, the family of unmanned aerial vehicles~(UAVs) are foreseen as integral parts of future wireless communication systems~\cite{You_6G}. Compared to the terrestrial infrastructure, UAV-mounted network nodes have the advantages of high flexibility and low roll out time in cases of prompt on-demand network deployment~\cite{ZhouCommMagazine2018}, such as the occasional large-scale rallies or sporting events~\cite{AlouiniVehMagazine2020}, disaster recovery and rescue~\cite{ChenTIFS2019}, emergency response~\cite{KumarTIFS2020}, etc. The UAV-mounted transceiver typically has a line-of-sight~(LOS) air-to-ground channel due to its altitude and flexible maneuverability~\cite{ZengCommMagazine2016}. However, UAVs are typically constrained to short-term deployments due to their limited on-board battery capacity that results in relatively short flight endurance~\cite{Hanzo_TVT2019}. Additionally, as the UAV hovers in the open sky while communicating with the ground nodes, it is vulnerable to eavesdroppers.

In order to prolong the UAV's flight time, we consider laser-charged UAVs as the payload carrying drones, where the UAVs can be wirelessly powered by a laser beam transmitter~\cite{LiuVTMag2016}. As reported in \cite{Achtelik_2011}, such wireless power transfer techniques are capable of supplying sufficient energy for a quadcopter drone for more than 12 hours of uninterrupted flight. Additionally, laser-charging has been considered in~\cite{LiuTVT2019,BaiTVT2019,LiuWCMag2021} for various UAV-aided communication systems in order to circumvent frequent battery recharges or replacements, thus reducing the service interruption of UAV drones. Currently, the effective laser-charging distance is limited to a few tens of meters with output power up to 200 Watts~\cite{JinTPE2019}, it is therefore applicable in similar scenarios as the tethered UAV~\cite{AlouiniVehMagazine2020} or tethered helium-filled balloon~\cite{AlouiniProcIEEE2020}. Moreover, laser-charged UAVs have superior load-bearing capabilities, while avoiding safety issues of tethered airborne platforms, such as tether tangling and malicious damage.

To enhance the information security of UAV-to-ground~(U2G) communications, we harness physical layer security techniques for protecting the information from eavesdropping, which potentially guarantees perfect secrecy in contrast to the classic cryptography-based approaches, regardless of the computing power of the malicious nodes~\cite{Wyner1975}. Throughout this paper, the UAV and the target node on the ground are referred to as the sources~($\mathcal{S}$) and the destination~($\mathcal{D}$), while the channel between them is the legitimate channel. By contrast, the channel between the source and the eavesdropper~($\mathcal{E}$) is referred to as the eavesdropping channel.

Fig.~\ref{figUseCase} highlights some typical use cases of UAV-based airborne systems, where a laser-charging UAV is used for on-site video surveillance of a highway, rural area, and disaster area. In these cases, uninterrupted surveillance is required for relatively long periods, while the conventional base-station~(BS) infrastructure cannot be promptly established. The laser-charged UAVs can be dispatched right at the point of interest and the video stream is then wirelessly transmitted to the data collector for further analysis. Note that the remote data collector could be deployed at a conveniently reachable location and may potentially support multiple UAVs for efficient data collection. The sensitive private information captured is protected by physical layer security techniques during its transmissions, and the UAV is able to adjust its position for improving the secrecy of its transmission to the data collector.

\begin{figure}[t!]
	\centering
	\includegraphics[width=3.3in]{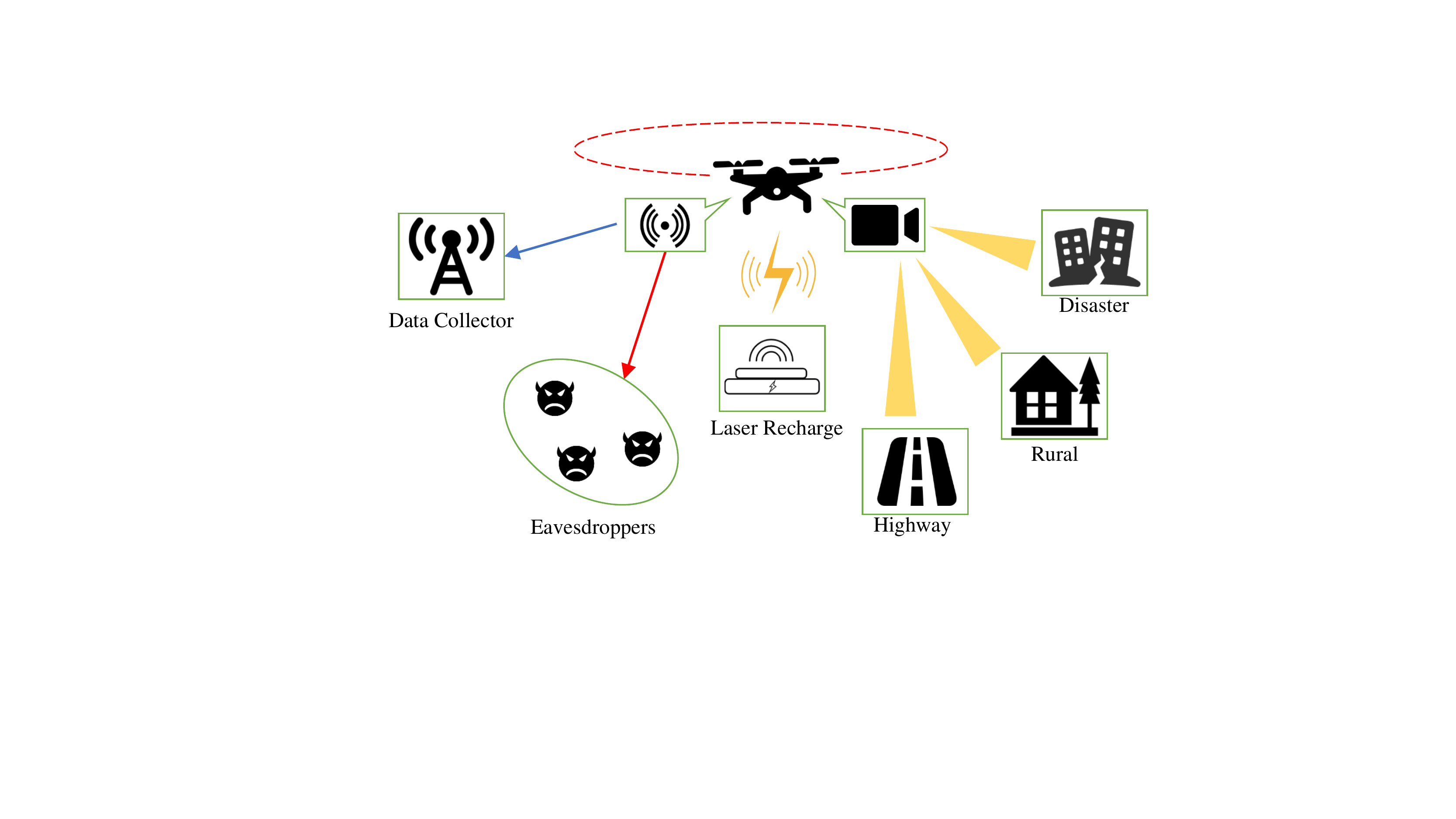}
	\caption{Laser-charging UAV in temporarily deployable video surveillance.}
	\label{figUseCase}
\end{figure}

\subsection{Related Works}
One of the main obstacles in the way of routinely relying on physical layer security is that the achievable information rate under the secrecy constraint heavily depends on the quality of the wireless channels. As the secrecy rate of the Gaussian wiretap channel is given by the difference of the Shannon rates between the legitimate and the eavesdropping channels~\cite{CsiszarTIT1978}, there may exist a secrecy outage zone, where no positive secrecy rate can be achieved. In order to support secrecy communications throughout the entire coverage area, the existing literature on secure UAV communications tends to advocate optimizing the transceiver for exploiting the mobility of the UAV to maximize its secrecy rate~\cite{WuWirelessMagazine2019}. In~\cite{KangCL2019}, the average worst case secrecy rate is maximized for a single-UAV network by optimizing the UAV's hovering altitude. The authors of~\cite{YaoTCOM2019} extend the secrecy rate optimization to the multi-UAV network, where the altitude of the UAV swarm can be adjusted. In contrast to the above-mentioned altitude-only optimization, joint power and unconstrained trajectory optimization is considered in~\cite{ZhangTWC2019,LiCL2019,DuoCC2021}, where the average secrecy rate within a certain flight duration is maximized by optimizing both the flight trajectory as well as the power allocation along the trajectory. Additionally, in secure UAV communications, the above-mentioned trajectory and power optimization has also been combined with user scheduling~\cite{LiTCOM2020}, artificial noise injection~\cite{XuTCOM2021}, friendly jamming~\cite{CaiJSAC2018}, etc.

Note that the authors of~\cite{KangCL2019,YaoTCOM2019,ZhangTWC2019,LiCL2019,DuoCC2021,LiTCOM2020,XuTCOM2021,CaiJSAC2018} only consider a single antenna at the UAVs, while designing the transmit power and deployment/trajectory. On the other hand, multi-antenna techniques are capable of substantially increasing the secrecy rate by exploiting the additional spatial Degree-of-Freedom (DoF) of Multi-Input Multi-Output (MIMO) channels~\cite{KhistiTIT2010}. However, the secrecy rate of MIMO-aided U2G communications has only been studied in a limited number of scenarios. In~\cite{WangAccess2019}, the authors consider cooperative MIMO transmission from an entire UAV swarm to the ground user under secrecy constraints. The UAVs in the swarm follow a predefined flight trajectory, while optimizing their transmit signals. In~\cite{LyuTVT2021}, a multi-antenna UAV applies zero-forcing transmit precoding~(TPC) for simultaneously communicating with multiple ground users under certain secrecy constraints, where the secrecy outage probability is obtained by modeling the positions of both the UAV and of the receivers as stochastic point processes. However, in~\cite{LyuTVT2021} and~\cite{WangAccess2019}, the inherent mobility of UAVs was not exploited for improving the security. In Table~\ref{tabCompare}, we boldly and explicitly contrast our novel contributions to the state-of-the-art.

In U2G MIMO communications, often there is both a dominant propagation component plus some additional random scattering components, where the theoretical model capturing these characteristics is typically the Rician MIMO channel~\cite{WillinkTVT2016}. Additionally, the eavesdroppers usually act as passive receivers and only their statistical channel state information (CSI), such as the mean and variance of the channel coefficients, may be known to $\mathcal{S}$. Under these assumptions, it is a challenge to analyze and optimize the secrecy rate of Rician wiretap channels, since no easy-to-manipulate mathematical expression of the information rate associated with arbitrary inputs is available, which would be required for the associated secrecy rate formulation. Previous results of secrecy rate optimization over Rician MIMO channels only exist for some particular scenarios assuming either uniform power allocation~\cite{AhmedAccess2018} or open-loop orthogonal space-time block coding~(OSTBC) transmission~\cite{FerdinandWCL2013}, or alternatively assuming Multi-Input Single-Output (MISO) transmission for the legitimate channel~\cite{LiTIFS2011}. Additionally, the information rates of the canonical Rician MIMO channels having arbitrary inputs have been characterized, for example in~\cite{McKayTIT2005,McKayCL2006,MatthaiouTCOM2009,TariccoTIT2008,ZhangJSAC2013}. In~\cite{McKayTIT2005,McKayCL2006,MatthaiouTCOM2009}, the Jensen and Minkowski type upper and lower bounds of the ergodic rate of Rician MIMO scenarios were derived, and some of the results rely on matrix-variate zonal polynomials, which are quite cumbersome in practice for numerical calculations. In~\cite{TariccoTIT2008} and~\cite{ZhangJSAC2013}, the authors resort to large matrix analysis for deriving the asymptotic ergodic rate of Rician MIMO scenarios, which become asymptotically tight when the number of antennas tends to infinity. However, this technique requires the solution of fixed-point equations for each transmitter and receiver pair. When the number of destinations and eavesdroppers is large, the computational complexity might become excessive for iterative optimization.

\begin{table*}[t]
	\renewcommand{\arraystretch}{1.2}%
	\centering
	\caption{\label{tabCompare} Recent Contributions on Secure UAV Communications}
	\begin{tabular}{|p{60pt}|p{90pt}|p{10pt}|p{10pt}|p{10pt}|p{10pt}|p{10pt}|p{10pt}|p{10pt}|p{10pt}|p{10pt}|p{10pt}|p{30pt}|}
		\hline
		\multicolumn{2}{|c|} {\textbf{Feature}} & \cite{KangCL2019} & \cite{YaoTCOM2019} & \cite{ZhangTWC2019} & \cite{LiCL2019} & \cite{DuoCC2021} & \cite{LiTCOM2020} & \cite{XuTCOM2021} & \cite{CaiJSAC2018} & \cite{LyuTVT2021} & \cite{WangAccess2019} & \text{Our Paper} \\
		\hhline{|=|=|=|=|=|=|=|=|=|=|=|=|=|}
		
		\multirow{3}{*}{Antenna config} & Multi-antenna $\mathcal{S}$ &       &      &      &      &      &      &      &      & $\surd$   &$\surd$  & $\surd$ \\
		\cline{2-13}
		\multirow{4}{*}{} & Multi-antenna $\mathcal{D}$       &      &      &      &      &      &      &      &      &    &    $\surd$     & $\surd$ \\
		\cline{2-13}
		\multirow{4}{*}{} & Multi-antenna $\mathcal{E}$       &      &      &      &      &      &      &     &      &  &   $\surd$   & $\surd$   \\
		\hhline{|=|=|=|=|=|=|=|=|=|=|=|=|=|}
		
		\multirow{3}{*}{Channel model} & Non-fading channel  & $\surd$   & $\surd$   & $\surd$   & $\surd$   & $\surd$    & $\surd$  & $\surd$  & $\surd$  &      &       &      \\
		\cline{2-13}
		\multirow{4}{*}{} & Rayleigh fading &      & $\surd$ & $\surd$  &       &$\surd$  &        &       &      & $\surd$  & $\surd$ &      \\
		\cline{2-13}
		\multirow{4}{*}{} & Rician fading  &       &      &      &       &        &       &       &       &       &       & $\surd$ \\
		\hhline{|=|=|=|=|=|=|=|=|=|=|=|=|=|}
		
		\multirow{2}{*}{CSI knowledge} & Full CSIs at UAV  & $\surd$ & $\surd$ & $\surd$ & $\surd$ & $\surd$ & $\surd$ & $\surd$ & $\surd$ &  $\surd$     &  &      \\
		\cline{2-13}
		\multirow{4}{*}{} & Statistical CSIs at UAV &       &        &       &          &        &            &         &        &   &   $\surd$      & $\surd$   \\
		\hhline{|=|=|=|=|=|=|=|=|=|=|=|=|=|}
		
		%\multirow{4}{*}{Power supply} & Laser-recharge UAV &       &        &       &         &        &        &        &        &        &        & $\surd$  \\
		%\hline
		
		\multirow{2}{*}{Transceiver design} & Linear precoding &        &        &          &         &        &         &        &         &  $\surd$  & $\surd$ &      \\
		\cline{2-13}
		\multirow{4}{*}{} & Optimized precoding &        &        &          &         &        &         &      &        &       &        &  $\surd$  \\
		\hhline{|=|=|=|=|=|=|=|=|=|=|=|=|=|}
		
		\multirow{3}{*}{Location design} & Altitude only opt.& $\surd$ & $\surd$ &        &       &          &        &          &        &        &        &        \\
		\cline{2-13}
		\multirow{4}{*}{} & Unconstrained location opt.          &      &       &$\surd$ & $\surd$  &$\surd$  &       & $\surd$ & $\surd$  &       &       &     \\
		\cline{2-13}
		\multirow{4}{*}{} & Constrained location opt.      &      &       &        &         &        & $\surd$ &         &         &        &        &$\surd$     \\
		\hhline{|=|=|=|=|=|=|=|=|=|=|=|=|=|}
		
		\multirow{3}{*}{Secrecy metric} & Average secrecy rate &      &       & $\surd$ & $\surd$ &  $\surd$ &         & $\surd$  &       &  &  $\surd$     &      \\
		\cline{2-13}
		\multirow{4}{*}{} & Outage secrecy rate               &      &  $\surd$  &        &        &       &        &         &       &  $\surd$   &      &     \\
		\cline{2-13}
		\multirow{4}{*}{} & Worst case secrecy rate           &$\surd$ &      &        &        &        &$\surd$ &        &$\surd$ &       &  & $\surd$   \\
		\hline
	\end{tabular}
	%~~~~~~~~Notation: Deterministic channel: only large scale fading component is considered; Long endurance UAV: UAV is required to be hovering for a long time mission rather than a fixed trajectory short time service; Fixed precoding: the fixed structure precoding vector or explicit formulation are designed.
\end{table*}

%\tablenote{Deterministic channel: only large scale fading component is considered.}  \\
%\tablenote{Long endurance UAV: UAV is required to be hovering for a long time mission rather than a fixed trajectory short time service.}\\
%\tablenote{Fixed precoding: the fixed structure precoding vector or explicit formulation are designed.}

\subsection{Contributions}
Since the existing results on U2G MIMO wiretap channels are either computationally complex or remain implicit in terms of the optimization variables, they cannot be directly applied for secrecy rate maximization. Therefore, we first construct a closed-form expression for the secrecy rate of generic U2G MIMO channels, when only statistical CSI is assumed at $\mathcal{S}$. The secrecy rate obtained is an explicit function of both the covariance matrix of the signals as well as of the UAV location, which readily lends itself to standard optimization techniques. The detailed contributions of this work are summarized as follows:
\begin{enumerate}
	\item We consider MIMO-aided transmissions from a laser-charged UAV to the ground node under a specific secrecy constraint. This scenario benefits both from the increased DoF of the signals and from the maneuverability of the UAV, each of which has only been applied for secure communications in isolation, but they have not been jointly optimized. We fill this knowledge-gap by jointly optimizing the transceiver and the UAV location. Explicitly, it will be shown in Section~\ref{secResult} that the joint optimization substantially improves the secrecy rate and avoids secrecy outages.
	\item In the secrecy rate construction, the Shannon rate of the legitimate channel is lower-bounded by Minkowski's inequality of the information rate of an equivalent lower dimensional MIMO channel, which avoids the cumbersome zonal polynomials. On the other hand, the Shannon rate of the eavesdropping channel is obtained by exploiting the random position of the eigenspaces of the legitimate and eavesdropping channel matrices, where the Haar matrix insertion technique of~\cite{ZhengTCOM2019} is applied.
	\item  In the secrecy rate optimization, an alternating optimization framework is proposed for designing the transmit signals and the deployment of the UAV for circumventing the non-convexity of the objective function. Explicitly, the signal-optimization subproblem and the deployment-optimization subproblem are solved alternatively, where each subproblem is approximated by its first-order Taylor series. This allows us to solve them efficiently by using standard optimization packages.
\end{enumerate}

The rest of the paper is organized as follows. Section~\ref{secModel} introduces the signal model of the U2G MIMO wiretap channels and outlines the secrecy rate optimization problem. In Section~\ref{secRate}, we characterize the information rate of both the legitimate and of the eavesdropping channels only relying on statistical channel knowledge at $\mathcal{S}$. In Section~\ref{secOpt}, we present the alternating signal and location optimization framework, while the numerical results are provided in Section~\ref{secResult}. Finally, Section~\ref{secConclude} concludes the paper.

\emph{Notations.} Throughout the paper, $\mb{A}$, $\bm{a}$, and $a$ denote matrices, vectors, and scalars, respectively. We denote $\mb{0}_{n}$ as the $n\times 1$ zero vector and $\mb{I}_n$ as the $n\times n$ identity matrix. The $(m\times n)$-element matrix block is defined as $\left[\left\{a_{i,j}\right\}_{m\times n}\right]$, where $a_{i,j}$ is the block element at the $i$-th row and $j$-th column, where the range of subscripts are $1\le i\le m$, $1\le j\le n$. The matrix conjugate-transpose, the matrix transpose, and the complex conjugate are denoted as $(\cdot)^\dagger$, $(\cdot)^\mathrm{T}$, and $(\cdot)^*$, respectively. The matrix trace and determinant are denoted as $\mathrm{Tr}(\mb{A})$ and $|\mb{A}|$. We use $\mc{CN}(\mb{0}_n,\mb{A})$ to denote the zero-mean complex Gaussian vector with covariance matrix $\mb{A}$ and $\mbb{E}\{\cdot\}$  is the expectation of a random variable.

\section{System Model}\label{secModel}
Consider the transmissions of a UAV to a ground node in the presence of $T$ non-colluding eavesdroppers. It is assumed that the UAV is equipped with $K$ antennas and the $\tau$-th receiver is equipped with $N_\tau$ antennas, where $0\le \tau\le T$. The receiver having the index of $\tau = 0$ refers to the destination, while the receivers with indices $1\le \tau\le T$ refer to eavesdroppers. The three-dimensional coordinates of the UAV are denoted as $\bm{p}_u = [p_{u,1}, p_{u,2}, p_{u,3}]^{\mathrm{T}}$, where $p_{u,1}$, $p_{u,2}$, and $p_{u,3}$ are the latitude, the longitude, and the altitude of the UAV, respectively. Similarly, the coordinates of the $\tau$-th receiver is denoted as $\bm{p}_\tau = [p_{\tau,1}, p_{\tau,2}, 0]^{\mathrm{T}}$. We assume that the UAV maintains a constant hovering altitude to avoid frequent lifting. Furthermore, the UAV is constrained to a certain maximum displacement $d_{\max}$ from the initial dispatch location, since the efficiency of power transfer drastically decreases for long laser-charging distances~\cite{LiuVTMag2016}.

\subsection{Signal Model}
Given the coordinates of the UAV as $\bm{p}_u$ and the transmit signal as $\bm{x} = [x_1,\ldots,x_K]^{\mathrm{T}}$, the signal received by the $\tau$-th receiver is denoted by $\bm{y}_\tau = [y_{\tau,1},\ldots,y_{\tau,N_\tau}]^{\mathrm{T}}$, which can be expressed as
\begin{align}
	\bm{y}_\tau = \sqrt{\rho_\tau(\bm{p}_u)} \mb{H}_\tau \bm{x} + \bm{n}_\tau,\quad 0\le \tau\le T,\label{eqy}
\end{align}
where $\rho_\tau(\bm{p}_u)$ denotes the large-scale propagation loss between the UAV and the $\tau$-th receiver, while the matrix $\mb{H}_\tau$ denotes the channel coefficients between the multi-antenna transmitter and receiver. The vector $\mb{n}_\tau$ is the additive white Gaussian noise, distributed as $\mb{n}_\tau\sim\mathcal{CN}(\mb{0}, \nu\mb{I})$.

The propagation loss depends on the propagation distance, which can be formulated as
\begin{equation}\label{eqRho}
\rho_\tau(\bm{p}_u) = \frac{c_p}{||\bm{p}_u - \bm{p}_\tau||^\alpha},
\end{equation}
where $c_p$ is the propagation loss at unit distance, $\alpha\ge 2$ is the path loss exponent, and $||\bm{p}_u - \bm{p}_\tau||$ is the Euclidean distance between the UAV and the $\tau$-th receiver. Given the UAV's hovering height, the channel coefficients $\mb{H}_\tau$ is verified in~[40, 41] to follow the Rician distribution in the U2G transmissions. Thus, we assume that the U2G channel $\mb{H}_\tau$ is modeled as
\begin{equation}\label{eqH}
	\mathbf{H}_{\tau}=\sqrt{\frac{\kappa_{\tau}}{\kappa_{\tau}+1}} \mathbf{H}_{d, \tau}+\sqrt{\frac{1}{\kappa_{\tau}+1}} \mathbf{H}_{s, \tau}, 
\end{equation}
where $\mathbf{H}_{d, \tau}$ denotes the deterministic Line-of-Sight~(LoS) propagation components, and $\mathbf{H}_{s, \tau}$ represents the random scattering due to multi-path fading. The ratio between the squared norm of the LoS and the scattering components is given by the Rician factor $\kappa_\tau$, which depends on the location of the $\tau$-th receiver, $0\le \tau\le T$. We assume that each entry of $\mb{H}_{s,\tau}$ is an independent and identically distributed (\emph{i.i.d.}) standard complex Gaussian random variable, i.e., we have $\left[\mb{H}_{s,\tau}\right]_{n,k}\sim\mc{CN}(0,1)$. Each entry of $\mb{H}_{d,\tau}$ is of unit modulus and accounts for the phase change due to the signal propagation between the antenna elements of the transmitter and the receiver, i.e., we have $\left[\mb{H}_{d,\tau}\right]_{n,k} = \mathrm{exp}\left(i\frac{2\pi}{\lambda}r_{\tau,n,k}^{1/2}\right)$, where $r_{\tau,n,k}$ is the distance between the $k$-th antenna of the transmitter and the $n$-th antenna of the $\tau$-th receiver, while $\lambda$ denotes the wavelength. Note that the LoS components $\mb{H}_{d,\tau}$ are determined by the angle-of-departure and angle-of-arrival of the transmit and receive antenna arrays, as well as by the configuration of the antenna elements~\cite{WONG1999}.

\subsection{Assumptions on System Model}
In this work, we adopt the following assumptions concerning the system model:
\begin{itemize}
\item The channel $\mb{H}_{\tau}$ is frequency-flat and obeys block-fading, i.e. the entries of $\mb{H}_\tau$ vary independently from one coherence time to another, but remain constant within each coherent interval;
\item The instantaneous CSI is known by the receiver, while the eavesdroppers additionally know the instantaneous CSI of $\mb{H}_0$. On the other hand, $\mathcal{S}$ only knows the statistical CSIs of $\mb{H}_\tau$, $0\le \tau \le T$;
\item The potential deployment locations of the UAV are within a certain distance $d_{\max}$ from the initial dispatch location $\bm{p}_u^{(1)}$, i.e., $|| \bm{p}_u - \bm{p}_u^{(1)} ||\le d_{\max}$. The distance $d_{\max}$ is determined by the maximum charging range of the laser transmitter.
\end{itemize}

Due to the hardware and signaling limitations of the transceiver, perfect instantaneous CSI cannot be acquired at $\mathcal{S}$. Additionally, the eavesdroppers typically act as receivers only, therefore, $\mathcal{S}$ cannot obtain their instantaneous CSI in practice. However, thanks to the recent development of sophisticated sensing and detection capabilities for the UAV~\cite{LiuTCOM2020}, they are capable of detecting the distance and angle of the surrounding unidentified communication devices. Therefore, the distance- and angle-dependent channel parameters, such as the propagation loss $\rho_\tau(\bm{p}_u)$ and the deterministic LoS components $\mb{H}_{d,\tau}$, can be detected and leveraged by $\mathcal{S}$ as \emph{a priori} information in support of optimal signal and location designs.

\subsection{Secrecy Rate via Joint Transceiver and Location Optimizations}
The U2G wiretap channel considered in (\ref{eqy}) in the presence of multiple non-colluding eavesdroppers is modeled by the compound wiretap channel derived in~\cite{BlochTIT2013}, where we denote the channel input at $\mathcal{S}$ as $\mc{X}$, and the channel output at the $\tau$-th receiver as $\mc{Y}_\tau$, $0\le\tau\le T$. According to \cite[Prop. 6]{BlochTIT2013}, given a certain location of the source $\bm{p}_u$, the following secrecy rate is achievable:
\begin{align}\label{eqRsec_ini}
\mathbb{R}_{\mathrm{sec}}(\bm{p}_u) = \max_{p(\mc{V},\mc{X})} \left[I(\mc{V};\mc{Y}_0) - \max_{1\le\tau\le T}I(\mc{V};\mc{Y}_\tau)\right]^+,
\end{align}
where we have $[x]^+ = \max(0,x)$, $I(x;y)$ denotes the mutual information between the random variables $x$ and $y$, $\mc{V}$ is an auxiliary random variable. The random variables $\mc{V}\rightarrow\mc{X}\rightarrow (\mc{Y}_0,\mc{Y}_\tau)$, $1\le\tau\le T$, form Markov chains, such that $\mc{V}$ and the tuples $(\mc{Y}_0,\mc{Y}_\tau)$ are statistically independent conditioned on $\mc{X}$. The outer maximization in (\ref{eqRsec_ini}) is carried out over the joint probability distribution $p(\mc{V},\mc{X})$. The secrecy rate (\ref{eqRsec_ini}) can be further optimized over the UAV's deployment location $\bm{p}_u$, which is formulated as:
\begin{align}\label{eqRsec_pb}
	\max_{\bm{p}_u} \mathbb{R}_{\mathrm{sec}}(\bm{p}_u),\quad \mathrm{s.t.}\  || \bm{p}_u - \bm{p}_u^{(1)} ||\le d_{\max}.
\end{align}

Since the instantaneous CSIs are available to the receivers, the outputs of the compound channels at the $\tau$-th receiver can be viewed as the dual-tuple $\mc{Y}_\tau = \{\bm{y}_\tau, \mb{H}_\tau\}$. By using the chain rule of mutual information, we have
\begin{align}\label{eqMI}
	I(\mc{V};\mc{Y}_\tau) &= I(\mc{V}; \bm{y}_\tau, \mb{H}_\tau)\nonumber\\
	 &= I(\mc{V};\bm{y}_\tau|\mb{H}_\tau) + I(\mc{V};\mb{H}_\tau) = I(\mc{V};\bm{y}_\tau|\mb{H}_\tau),
\end{align}
where the last equality in (\ref{eqMI}) is obtained since the instantaneous CSI $\mb{H}_\tau$ is not available to the transmitter and therefore, the mutual information becomes $I(\mc{V};\mb{H}_\tau) = 0$.

It is a challenging problem to find the distribution $p(\mc{V},\mc{X})$ in (\ref{eqRsec_ini}) maximizing the secrecy rate for compound wiretap channels. Here, we adopt the usual Gaussian signaling assumption (see, e.g. \cite{LiPetropuluTWC2011}), where we have $\mc{V} = \bm{x}$ and $\bm{x}$ is complex Gaussian distributed as $\bm{x}\sim\mc{CN}(\mb{0},\mb{Q})$ with the covariance $\mb{Q} = \mbb{E}[\bm{x}\bm{x}^\dagger]$. Therefore, the conditional mutual information $I(\mc{V};\bm{y}_\tau|\mb{H}_\tau)$ in~(\ref{eqMI}) is given by the well-known ergodic mutual information of the Gaussian MIMO channel as
\begin{align}\label{eqMI_gauss}
	I(\mc{V};\bm{y}_\tau|\mb{H}_\tau) = \mbb{E}\left\{\log\left|\mb{I}_{N_\tau} + \frac{\rho_\tau(\bm{p}_u)}{\nu}\mb{H}_\tau\mb{Q}\mb{H}_\tau^\dagger \right|\right\},
\end{align}
where the expectation is due to the definition of the conditional mutual information~\cite{Cover2006}, which is taken over the random variate $\mb{H}_\tau$. Upon applying the eigenvalue decomposition to the covariance matrix $\mb{Q}$, we obtain $\mb{Q} = P_{\max} \mb{W}\mb{\Psi}\mb{W}^\dagger$, where $0 \le \mathrm{Tr}(\mb{Q})\le P_{\max}$ and $P_{\max}$ denotes the maximum transmit power of $\mathcal{S}$, $\mb{\Psi}$ is a diagonal matrix whose $k$-th diagonal entry $\psi_k$, $1\le k\le K$, is the $k$-th largest eigenvalue of the matrix $(1/P_{\max})\mb{Q}$, and $\mb{W}$ is a unitary matrix whose $k$-th column is the eigenvector corresponding to the $k$-th eigenvalue $\psi_k$. Traditionally, $\mb{W}$ and $\mb{\Psi}$ are termed as the precoding and power allocation matrices, respectively. Note that by definition, we have the normalized power constraint $\mathrm{Tr}(\mb{\Psi}) = \sum_{k = 1}^K \psi_k \le 1$. Recalling the definition of $\rho_\tau(\bm{p}_u)$ in (\ref{eqRho}), (\ref{eqMI_gauss}) can be rewritten as
\begin{align}\label{eqCovH0}
& \mbb{E}\left\{\log\left|\mb{I}_{N_\tau} + \frac{P_{\max} c_p }{\nu ||\bm{p}_u - \bm{p}_\tau||^\alpha}\mb{H}_\tau \mb{W}\mb{\Psi}\mb{W}^\dagger \mb{H}_\tau^\dagger \right|\right\}\nonumber\\
&\qquad\qquad\qquad = \mbb{E}\left\{\log\left|\mb{I}_{N_\tau} + \frac{ \gamma }{z_\tau}\mb{H}_\tau \mb{W}\mb{\Psi}\mb{W}^\dagger \mb{H}_\tau^\dagger \right|\right\},
\end{align}
where $\gamma = P_{\max}c_p/\nu$ and $z_\tau = ||\bm{p}_u - \bm{p}_\tau||^\alpha$.

Nonetheless, it is still a challenging problem to find the optimal $\mb{W}$ and $\mb{\Psi}$ for the generic compound wiretap channels. Attempts have been made in \cite{LiTIFS2011} and \cite{LiPetropuluTWC2011} to explore the structure of the optima in some particular cases of Rician wiretap channels. Given the statistical CSI knowledge at $\mathcal{S}$, in the MISO wiretap channels, the authors of \cite{LiPetropuluTWC2011} show that the columns of the optimal precoding matrix $\mb{W}$ are the same as the eigenvectors of the covariance matrix of the legitimate channel, i.e., $\mbb{E}[\mb{H}_0^\dagger\mb{H}_0]$. Although a MISO configuration is assumed in \cite{LiPetropuluTWC2011}, the same precoder structure can be adopted for MIMO settings, which would result in beneficially harnessing the statistical channel knowledge available at $\mathcal{S}$. Specifically, upon defining $\bar{\mb{H}}_{d,0} = \sqrt{\kappa_{0}} \mb{H}_{d,0}$, the covariance matrix of $\mb{H}_0$ is calculated as
\begin{equation}
\begin{aligned}\label{eqH0H0}
\mbb{E}[\mb{H}_0^\dagger\mb{H}_0] &= \frac{1}{\kappa_{0} + 1}\bar{\mb{H}}_{d,0}^\dagger\bar{\mb{H}}_{d,0} + \frac{1}{\kappa_{0}+1} \mb{I}_{K}\\ &= \frac{1}{\kappa_{0} + 1} \mb{V}\mb{\Omega}\mb{V}^\dagger + \frac{1}{\kappa_{0}+1} \mb{I}_{K} \\ &= \frac{1}{\kappa_{0}+1}\mb{V}\left(\mb{\Omega} + \mb{I}_{K}  \right) \mb{V}^\dagger,
\end{aligned}
\end{equation}
where $\bar{\mb{H}}_{d,0}^\dagger\bar{\mb{H}}_{d,0} = \mb{V}\mb{\Omega}\mb{V}^\dagger$ represents the eigenvalue decomposition. The~($K\times K$)-element diagonal matrix obeys $\mb{\Omega} = (\mb{\Omega}^{1/2})^\dagger\mb{\Omega}^{1/2}$, where we have $\mb{\Omega}^{1/2} = \left[\mathrm{diag}(\bm{\omega})^{1/2}, \mb{0}_{N_0\times (K-N_0)}\right]$ when $K\ge N_0$, and $\mb{\Omega}^{1/2} =\left[\mathrm{diag}(\bm{\omega})^{1/2}, \mb{0}_{K\times (N_0-K)}\right]^\mathrm{T}$ when $N_0>K$. The vector $\bm{\omega} = [\omega_1,\ldots,\omega_{\min(N_0,K)}]^{\mathrm{T}}$ denotes the first $\min(N_0,K)$ number of eigenvalues of $\bar{\mb{H}}_{d,0}^\dagger\bar{\mb{H}}_{d,0}$, arranged in descending order. The columns of $\mb{V}$ are the corresponding eigenvectors. Since the scattering components $\mb{H}_{s,0}$ are uncorrelated, $\mb{V}$ also represents the eigenvectors of $\mbb{E}[\mb{H}_0^\dagger\mb{H}_0]$ as indicated in the third equality of (\ref{eqH0H0}). As $\mathcal{S}$ only has the knowledge of $\bar{\mb{H}}_{d,0}$, it is plausible for the source to transmit in the same eigenspace of $\bar{\mb{H}}_{d,0}$. Therefore, we assume using the precoder $\mb{W} = \mb{V}$ and the resultant secrecy rate is indeed achievable in this particular signal structure.\footnote{If artificial noise injection is applied at the transmitter, the transmit signal $\bm{x}$ can be reformulated as $\mathbf{x}=\mathbf{V} \mathrm{diag}(\bm{\psi}^{1/2}_{s}) \mathbf{s}+\mathbf{V} \mathrm{diag}(\bm{\psi}_{a}^{1/2}) \mathbf{a}$, where $\mathbf{s}$ and $\mathbf{a}$ denote the confidential signal and the artificial noise, while $\bm{\psi}_{s}$ and $\bm{\psi}_{a}$ represent the power allocations for the signal and the artificial noise. The joint optimization of $\bm{\psi}_{s}$ and $\bm{\psi}_{a}$ can be achieved by using a similar theoretical framework as discussed in Section III, with some straightforward derivations.}

Substituting (\ref{eqRsec_ini}) and (\ref{eqMI})-(\ref{eqCovH0}) into (\ref{eqRsec_pb}), the secrecy rate is then lower-bounded by the optimum of the following joint transmission and UAV deployment problem:
\begin{equation}
\begin{aligned}\label{eqRsec}
	R_{\mathrm{sec}} = & \max_{ \bm{\psi}\in\mbb{R}_{+}^K,\  \bm{p}_u } \left\{ R_0(\bm{\psi}, \bm{p}_u) - \max_{1\le\tau\le T} R_\tau(\bm{\psi}, \bm{p}_u) \right\}^+,\\
& \quad \mathrm{s.t.}\ \sum_{k=1}^K \psi_k\le 1,\ || \bm{p}_u - \bm{p}_u^{(1)} ||\le d_{\max},
\end{aligned}
\end{equation}
where $\mbb{R}_{+}^K$ denotes the space of the $K$-dimensional non-negative real vectors and
\begin{align}\label{eqRtau}
 R_\tau(\bm{\psi}, \bm{p}_u) &=  \mbb{E}\left\{\log\left|\mb{I}_{N_\tau} + \frac{ \gamma }{z_\tau}\mb{H}_\tau \mb{V}\mathrm{diag}(\bm{\psi})\mb{V}^\dagger \mb{H}_\tau^\dagger \right|\right\}.
\end{align}
To solve the problem (\ref{eqRsec}), the closed-form expression of the information rate $R_\tau(\bm{\psi},\bm{p}_u)$ is needed, which is equivalent to the ergodic rate of the Rician MIMO channels in conjunction with $\mb{V}\mathrm{diag}(\bm{\psi})\mb{V}^\dagger$ as the input covariance matrix. Unfortunately, such a closed-form expression is not yet available for generic Rician MIMO channels, it only exists for some special cases, such as the asymptotic expressions of the scenarios, where the number of antennas or the SNR value becomes large. In the next section, we derive an accurate approximation of the information rate of the eavesdropping channel $R_\tau(\bm{\psi},\bm{p}_u)$ for arbitrary MIMO configurations and of the input covariance matrix, which are also amenable to using standard mathematical optimization toolkits, as discussed in Section~\ref{secOpt}.

\section{Information Rate of Rician MIMO Wiretap Channels}\label{secRate}
In this section, we investigate the closed-form expressions of the information rate $R_0(\bm{\psi}, \bm{p}_u)$ of the legitimate channel, and the information rate $R_\tau(\bm{\psi},\bm{p}_u)$, $1\le\tau\le T$, of the eavesdropping channel, using different approximation techniques. Specifically, since the precoder $\mb{W}$ is aligned with the eigenspace of $\bar{\mb{H}}_{d,0}$, due to the power allocation vector $\bm{\psi}$, the original Rician MIMO channel $\mb{H}_{0}$ becomes equivalent to a lower dimensional MIMO channel having the same principal eigenvalues, where Minkowski's inequality can be applied for lower-bounding the information rate. On the other hand, the precoder $\mb{W}$ is misaligned with the eavesdropping channels hence the corresponding eigenchannels cannot be reduced. To address this issue, we apply the Haar matrix insertion~(HMI) technique to approximate the information rate $R_\tau(\bm{\psi},\bm{p}_u)$, which was previously proposed in~\cite{ZhengTCOM2019} and yields reasonably accurate agreement with the exact ergodic rate for a wide range of system settings. Finally, numerical simulations are provided for validating the proposed approximations. In what follows, we will frequently use the notations $s_\tau = \min(N_\tau, K)$, $t_\tau = \max(N_\tau, K)$, and $q_\tau = \mathrm{rank}(\mb{H}_{d,\tau})$, where the subscript $\tau$ may be ignored when it is clear from the context.

\subsection{Information Rate of Legitimate Channel}\label{secR0}
Let $\bar{\mb{H}}_{d,0} = \mb{U}\mb{\Omega}^{1/2}\mb{V}^\dagger$ denote the singular value decomposition of $\bar{\mb{H}}_{d,0}$, where $\mb{\Omega}$ is the same as in (\ref{eqH0H0}) and the columns of $\mb{U}$ and $\mb{V}$ are the left and right singular vectors of $\bar{\mb{H}}_{d,0}$ upon substituting~(\ref{eqH}) along with $\tau = 0$ into (\ref{eqRtau}), we can formulate the information rate between $\mathcal{S}$ and $\mathcal{D}$ as
\begin{align}
	&R(\bm{\psi},\bm{p}_u)  = \nonumber\\
	&\ \ \mbb{E}\left\{\log\left|\mb{I}_{N} + \frac{ \bar{\gamma} }{z}\left({\mb{\Omega}}^{1/2}  +  \bar{\mb{H}}_{s}\right) \mb{\Psi} \left( {\mb{\Omega}}^{1/2} +  \bar{\mb{H}}_{s}\right)^\dagger \right|\right\},\label{eqR0}
\end{align}
where we have dropped all the subscripts $0$ for notation simplicity, and we define the following modified parameters $\bar{\gamma} = \gamma/(\kappa_{0}+1)$ and $\bar{\mb{H}}_{s} = \mb{U}^\dagger\mb{H}_{s}\mb{V}$, where ${\bar{\mb{H}}}_{s}$ has the same probability distribution as $\mb{H}_{s}$ due to the unitary invariance of the \emph{i.i.d.} complex Gaussian matrix.

From a secrecy rate maximization perspective, it is reasonable for $\mathcal{S}$ to transmit in the eigenspace of $\mb{H}_{0}$ having positive eigenvalues, which represent the amplitudes of the eigenchannels between the $\mathcal{S}$ and $\mathcal{D}$. This scheme has the potential of increasing the information rate. On the other hand, since $\mb{\Psi}$ is a diagonal matrix, setting some of the diagonal element $\psi_k$ to zero is equivalent to deactivating the corresponding eigenchannel. Therefore, the number of activated eigenchannels is no larger than the rank of the legitimate channel $\mb{H}_0$. With the eigenvalues $\bm{\omega}$ arranged in descending order, the ($K\times K$)-element power allocation matrix $\mb{\Psi}$ is of the form $\mb{\Psi} = \mathrm{diag}([\psi_1,\ldots,\psi_r, 0, \ldots, 0]^{\mathrm{T}})$, where $\psi_i>0$, $1\le i\le r$, and $r$ denotes the number of activated eigenchannels with $r\le \min(N_0,K)$.

Let us denote $\mathrm{diag}([\psi_1,\ldots,\psi_r]^{\mathrm{T}})$ by $\widehat{\mb{\Psi}}$, and the truncation of ${\mb{\Omega}}^{1/2}$ by keeping the left ($N\times r$)-element matrix block by $\widehat{\mb{\Omega}}^{1/2}$, finally the ($N\times r$)-element \emph{i.i.d.} standard complex Gaussian distributed random matrix by $\widehat{\mb{H}}$. Then $R(\bm{\psi}, \bm{p}_u)$ in (\ref{eqR0}) can be rewritten and lower-bounded as:
\begin{align}
&R(\bm{\psi},\bm{p}_u) = \mbb{E}\left\{\log\left|\mb{I}_{r} + \frac{ \bar{\gamma} }{z} \widehat{\mb{\Psi}} \mb{G} \right|\right\} \label{eqR0_a} \\
&\ge r \mbb{E}\left\{ \log  \left[1  +  \frac{ \bar{\gamma} }{z} \exp\left( \frac{1}{r}\log\left|\widehat{\mb{\Psi}} \mb{G} \right|\right)\right]\right\}\label{eqR0_b}\\
&\ge  r \log  \left[1 + \frac{ \bar{\gamma} }{z} \exp \left( \frac{1}{r}\log \left|\widehat{\mb{\Psi}}\right| + \frac{1}{r}\mbb{E}\left\{ \log\left| \mb{G} \right| \right\} \right) \right],\label{eqR0_c}
\end{align}
where $\mb{G} = \left( \widehat{\mb{\Omega}}^{1/2} + \widehat{\mb{H}}\right)^\dagger\left(\widehat{\mb{\Omega}}^{1/2} + \widehat{\mb{H}}\right)$, (\ref{eqR0_b}) is obtained by applying Minkowski's inequality~\cite{Horn1985} and (\ref{eqR0_c}) is valid due to the convexity of the function $\log(1+a\exp(x))$ when $a>0$. Note that (\ref{eqR0_a}) indicates that the information rate of the original MIMO channel is equal to an equivalent MIMO channel having antenna dimensions of $N\times r$. Since $\widehat{\mb{\Omega}}$ preserves exactly the first $r$ eigenvalues of ${\mb{\Omega}}$, we define $\widehat{q} = \min(q,r)$ as the number of non-zero eigenvalues among $\bm{\omega}$ of the reduced channel and introduce the notation of $\widehat{\bm{\omega}} = [{\omega}_1,\ldots,{\omega}_{\widehat{q}}]^{\mathrm{T}}$. The lower bound (\ref{eqR0_c}) can be characterized explicitly by using~\cite[Thm. 7]{McKayTIT2005} as formulated in the following proposition.
\begin{proposition}[McKay and Collings~\cite{McKayTIT2005}]\label{propRL}
	The expression of the lower bound (\ref{eqR0_c}) is given by
	\begin{align}\label{eqR0_d}
		R_L(\bm{\psi},\bm{p}_u) = r \log\left(1 + \frac{ \bar{\gamma} }{z} E(\widehat{\mb{\Psi}})\right),
	\end{align}
	where
	\begin{align}
	&E(\widehat{\mb{\Psi}}) = \exp\left[ \frac{1}{r}\log\left|\widehat{\mb{\Psi}}\right| +\right.\nonumber\\
	&\quad \left. \frac{1}{r} \left( \sum_{i=1}^{r-1}\varphi(t-i) + \frac{(-1)^{\frac{\widehat{q}(\widehat{q}-1)}{2}}}{\left|\mb{V}_{\widehat{q}}(\widehat{\bm{\omega}})\right|} \sum_{j=1}^{\widehat{q}} \left|\mb{V}_{\widehat{q},j}(\widehat{\bm{\omega}})\right| \right) \right],
	\end{align}
	and $\varphi(\cdot)$ denotes the digamma function, and $\mb{V}_{\widehat{q}}(\widehat{\bm{\omega}}) = \left[{\omega}_j^{i-1}\right]$ is an $\widehat{q}\times \widehat{q}$ Vandermonde matrix. The matrix $\mb{V}_{\widehat{q},j}(\widehat{\bm{\omega}})$ is formed by replacing the $j$-th column of $\mb{V}_{\widehat{q}}(\widehat{\bm{\omega}})$ with a vector, whose $i$-th element is given by
	\begin{align}
		h_i({\omega}_j) = {\omega}_j^{i-1}\sum_{k=1}^\infty \frac{\gamma(k, {\omega}_j)}{r - \widehat{q} + i + k - 1},\quad 1\le i\le \widehat{q},
	\end{align}
	where $\gamma(k,x) = \frac{1}{\Gamma(k)}\int_0^{x}t^{k-1}e^{-t}\mathrm{d}t$ is the lower incomplete gamma function.
\end{proposition}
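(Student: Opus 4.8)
The plan is to reduce Proposition~\ref{propRL} to a single expectation, namely $\mbb{E}\{\log|\mb{G}|\}$, since this is the only stochastic quantity in the lower bound (\ref{eqR0_c}); the term $\frac{1}{r}\log|\widehat{\mb{\Psi}}|$ and the outer map $x\mapsto r\log\bigl(1+\tfrac{\bar{\gamma}}{z}e^{x}\bigr)$ are deterministic and already in the target form. The first step is to recognise the matrix structure of $\mb{G}$. Writing $\mb{A}=\widehat{\mb{\Omega}}^{1/2}+\widehat{\mb{H}}$, the discussion following (\ref{eqR0}) shows that $\bar{\mb{H}}_{s}=\mb{U}^\dagger\mb{H}_{s}\mb{V}$ is again i.i.d.\ standard complex Gaussian by unitary invariance, so its $(N\times r)$ truncation $\widehat{\mb{H}}$ has i.i.d.\ $\mc{CN}(0,1)$ entries; consequently $\mb{A}$ is complex Gaussian with unit variance and deterministic mean matrix $\widehat{\mb{\Omega}}^{1/2}$, whose left $r\times r$ block is $\mathrm{diag}(\omega_1^{1/2},\ldots,\omega_r^{1/2})$. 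Hence $\mb{G}=\mb{A}^\dagger\mb{A}$ is an $(r\times r)$ \emph{non-central} complex Wishart matrix with $N$ degrees of freedom, whose non-centrality matrix $\widehat{\mb{\Omega}}^{1/2\dagger}\widehat{\mb{\Omega}}^{1/2}=\mathrm{diag}(\omega_1,\ldots,\omega_r)$ has exactly $\widehat{q}=\min(q,r)$ non-zero eigenvalues, collected in $\widehat{\bm{\omega}}$.

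The second step is to invoke \cite[Thm.~7]{McKayTIT2005}, which gives a closed form for $\mbb{E}\{\log|\mb{G}|\}$ for precisely this class of matrices: a digamma contribution $\sum_{i=1}^{r-1}\varphi(t-i)$ (with $t=N$, because the reduced channel has $r\le\min(N_0,K)\le N_0=N$, so the receive dimension dominates) plus the determinant ratio built from the Vandermonde matrix $\mb{V}_{\widehat{q}}(\widehat{\bm{\omega}})$ and the modified matrices $\mb{V}_{\widehat{q},j}(\widehat{\bm{\omega}})$, whose replacement column carries the incomplete-gamma series $h_i(\omega_j)$. Substituting this expression together with $\frac{1}{r}\log|\widehat{\mb{\Psi}}|$ into the exponent of (\ref{eqR0_c}) and pulling out the factor $\frac{1}{r}$ yields $R_L(\bm{\psi},\bm{p}_u)=r\log\bigl(1+\tfrac{\bar{\gamma}}{z}E(\widehat{\mb{\Psi}})\bigr)$ with $E(\widehat{\mb{\Psi}})$ exactly as stated, which completes the argument.

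Three points need care, and this is where I expect the actual work to concentrate. (i)~The orientation hypotheses of \cite[Thm.~7]{McKayTIT2005} must be applied to the \emph{reduced} $(N\times r)$ problem rather than to the original $(N_0\times K)$ one, which is what fixes the argument $t=N$ and the number of inner digamma terms as $r-1$. (ii)~The non-centrality matrix is in general rank-deficient, i.e.\ $\widehat{q}$ may be strictly less than $r$, so one must use the rank-deficient specialisation of the theorem; this is precisely why only the $\widehat{q}$ eigenvalues $\widehat{\bm{\omega}}$ and $\widehat{q}\times\widehat{q}$ Vandermonde objects appear, rather than full $r\times r$ ones. (iii)~The displayed formula presupposes distinct $\omega_i$; the degenerate cases --- repeated eigenvalues of $\bar{\mb{H}}_{d,0}^\dagger\bar{\mb{H}}_{d,0}$, and in particular the pure-Rayleigh limit $\kappa_{0}\to 0$, where $\widehat{\bm{\omega}}\to\mb{0}$ and the determinant ratio is nominally $0/0$ --- are recovered by a continuity/L'Hospital argument, under which the expression collapses to the classical central-Wishart value $\sum_{i=0}^{r-1}\varphi(N-i)$. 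I therefore expect the main obstacle to be not a fresh probabilistic computation but the faithful reconciliation of the precoded and truncated channel with the exact index conventions and hypotheses of \cite[Thm.~7]{McKayTIT2005}, in particular checking that after the alignment $\mb{W}=\mb{V}$ the effective mean matrix is $\widehat{\mb{\Omega}}^{1/2}$ with singular values $\widehat{\bm{\omega}}^{1/2}$ and that the truncation leaves no residual cross terms.
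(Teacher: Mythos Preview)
Your proposal is correct and follows the same route as the paper: identify $\mb{G}$ as an $r\times r$ non-central complex Wishart matrix with $N$ degrees of freedom and non-centrality eigenvalues $\widehat{\bm{\omega}}$, then invoke \cite[Thm.~7]{McKayTIT2005} for $\mbb{E}\{\log|\mb{G}|\}$ and substitute into~(\ref{eqR0_c}). In fact the paper offers no proof at all beyond the attribution ``by using~\cite[Thm. 7]{McKayTIT2005},'' so your explicit identification of the Wishart structure and your remarks on the rank-deficient and degenerate cases are more detailed than what the paper provides.
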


\subsection{Information Rate of Eavesdropping Channel}

Similar to (\ref{eqR0}), ignoring the subscript $\tau$, the information rate of the eavesdropping channel is:
\begin{align}
	&R(\bm{\psi},\bm{p}_u)\nonumber\\
	&= \mbb{E}\left\{\log\left|\mb{I}_{N} + \frac{ \bar{\gamma} }{z}\left( \bar{\mb{H}}_{d} + {\mb{H}}_{s}\right) \mb{V}\mb{\Psi}\mb{V}^\dagger \left( \bar{\mb{H}}_{d} + {\mb{H}}_{s} \right)^\dagger \right|\right\}\label{eqRtau_1}\\
	&= \mbb{E}\left\{\log\left|\mb{I}_{N} + \frac{ \bar{\gamma} }{z}\left( \widehat{\mb{H}}_{d} + \widehat{\mb{H}}_{s}\right) \widehat{\mb{\Psi}} \left( \widehat{\mb{H}}_{d} + \widehat{\mb{H}}_{s} \right)^\dagger \right|\right\},\label{eqRtau_2}
\end{align}
where $\widehat{\mb{H}}_{d}$ and $\widehat{\mb{H}}_{s}$ represent the truncated versions of $\bar{\mb{H}}_{d}\mb{V}$ and ${\mb{H}}_{s}\mb{V}$ attained by retaining their left $N\times r$ matrix blocks, respectively. However, the rate bounds of~(\ref{eqRtau_2}) obtained by following similar approaches to those in Section~\ref{secR0} or in~\cite{McKayCL2006} are inconvenient to use in the secrecy rate optimization to be presented in Section~\ref{secOpt}. This is due to the facts that: (1) When $r>N$, the Jensen-type and the Minkowski-type rate bounds require the evaluation of matrix-variate zonal polynomials~\cite{Conradie1984}, which are quite complex in numerical computations; (2) The rate bounds require the calculation of the eigenvalues of the truncated version of $\bar{\mb{H}}_{d}\mb{V}$ for a given $\widehat{\mb{\Psi}}$, which have to be re-calculated whenever the power is allocated differently. This also incurs significant amount of extra computations for solving the eigenvalue problem in the iterative optimization, since the power allocation may change from one iteration to another.

To avoid the aforementioned issues, we observe that the precoder is aligned with the eigenspace of the legitimate channel, while it is randomly projected into the eigenspace of the eavesdropping channels. Therefore, we follow the HMI technique proposed in~\cite{ZhengTCOM2019}, which approximates the information rate (\ref{eqRtau_1}) by replacing the fixed unitary matrix $\mb{V}$ by the randomly distributed unitary Haar matrix. The approximations are rendered accurate by beneficially exploiting the random relative positions between the eigenspaces of the legitimate and the eavesdropping channels. The results obtained are also relatively simple, as the eigenvalues of $\bar{\mb{H}}_{d}^\dagger\bar{\mb{H}}_{d}$ are decoupled from the power allocation $\bm{\Psi}$, which simplifies the iterative optimization.

Upon introducing the notation of $\mb{X} = \bar{\mb{H}}_{d} + {\mb{H}}_{s}$ and applying the HMI procedures proposed in \cite{ZhengTCOM2019}, an approximation $R_{U}(\bm{\psi},\bm{p}_u)$ of $R(\bm{\psi},\bm{p}_u)$ in (\ref{eqRtau_1}) can be constructed as
\begin{align}
  R_{U}(\bm{\psi},\bm{p}_u) = \log\mbb{E}\left\{\left|\mb{I}_K + \frac{\bar{\gamma}}{z} \mb{X}^\dagger\mb{X}\mb{T}\mb{\Psi}\mb{T}^\dagger\right|\right\},\label{eqRtau_appx}
\end{align}
where $\mb{T}\in\mc{U}_K$ is a random unitary Haar matrix and $\mc{U}_K$ is the unitary group containing all $K\times K$ unitary matrices \cite{Sternberg1995}. Note that the expectation in (\ref{eqRtau_appx}) is taken over both the random matrix $\mb{X}$ and the random Haar matrix $\mb{T}$.

Without loss of generality, we assume that the elements of the vector $\bm{\psi}$ are ordered and the first $r$ elements are non-zero, i.e., we have $\psi_1\ge \psi_2\ge\ldots\ge \psi_r>0$ and $\psi_{r+1}=\ldots=\psi_K = 0$. Recalling that $\bar{\mb{H}}_{d}^\dagger\bar{\mb{H}}_{d}$ has $q$ non-zero eigenvalues, so that $\omega_1\ge\ldots\ge\omega_{q}>0$ and $\omega_{q+1} = \ldots = \omega_K = 0$, we define the following notations for $1\le i\le (s+1)$ and give the expressions of $R_{U}(\bm{\psi},\bm{p}_u)$ in Proposition~\ref{propRtau}:
\begin{align}
a_{i,j} &= \Gamma(t-s+i+j-1),\quad  1\le j\le s-q,\\
b_{i,j} &= e^{\omega_j}\sum_{k=0}^{i-1} \frac{\Gamma(i)\Gamma(t-s+i)\Gamma(t-s+1)\omega_j^k}{\Gamma(i-k)\Gamma(k+1)\Gamma(t-s+k+1)},\nonumber\\&\qquad\qquad\qquad\qquad\qquad\qquad\qquad\qquad  1\le j\le q.
\end{align}
The closed-form expressions of $R_U(\bm{\psi},\bm{p}_u)$ under different configurations of $K$ and $N$ are summarized in the following proposition.

\begin{proposition}\label{propRtau}
The expressions of $R_U(\bm{\psi},\bm{p}_u)$ are given as follows:
\begin{align}\label{eqRtau_4}
  R_U(\bm{\psi},\bm{p}_u) = g_0 + g_1(\bm{\psi},z),
\end{align}
where
\begin{align*}
	& g_0 = \sum_{j = 1}^{\min(r,s)}\log\frac{\Gamma(K+1-j)\Gamma(j+1)}{\Gamma(K+1)}\nonumber\\
	&\qquad\qquad\qquad - \log\left|
	\begin{array}{ll}
		\left\{ a_{i,j} \right\}_{s\times (s-q)} &
		\left\{ b_{i,j} \right\}_{s\times q}
	\end{array}
	\right|,\nonumber\\
	& g_1(\bm{\psi}, z) = \log\frac{\left|\mb{D}(\bm{\psi},z)\right|}{\left|\mb{V}_r(\bm{\psi})\right|} +  \frac{r(r-1)}{2}\log\left(\frac{z}{\bar{\gamma}}\right).
\end{align*}
The definition of Vandermonde matrix $\mb{V}_r(\bm{\psi})$ follows the one in Proposition~\ref{propRL}. The matrix $\mb{D}(\bm{\psi},z)$ is given in (\ref{eqD1}) and (\ref{eqD2}) on top of the next page.

\begin{figure*}[tb]
	\begin{align}
		&\mbox{When }s\ge r: \mb{D}(\bm{\psi},z) = \left[
		\begin{array}{ll}
			\left\{ a_{j,i} \right\}_{(s-q)\times (s-r)} & \left\{\sum\limits_{n=s-r}^{s} {K\choose s-n} a_{n+1,i}  \left(\frac{\bar{\gamma}}{z}\psi_j\right)^{n+r-s}  \right\}_{(s-q)\times r}
			\\
			\left\{ b_{j,i} \right\}_{q\times (s-r)} & \left\{ \sum\limits_{n=s-r}^s {K\choose s-n} b_{n+1,i} \left(\frac{\bar{\gamma}}{z}\psi_j\right)^{n+r-s} \right\}_{q\times r}
		\end{array}
		\right].\label{eqD1}\\
		&\mbox{When }s< r:\mb{D}(\bm{\psi},z) = \left[
		\begin{array}{l}
			\left\{ \left(\frac{\bar{\gamma}}{z}\right)^{i-1}\psi_j^{i-1} \right\}_{(r-s)\times r}\\
			\left\{ \sum\limits_{n=0}^s {K\choose s-n} a_{n+1,i}\left(\frac{\bar{\gamma}}{z}\psi_j\right)^{n+r-s} \right\}_{(s-q)\times r} \\
			\left\{ \sum\limits_{n=0}^s {K\choose s-n} b_{n+1,i}\left(\frac{\bar{\gamma}}{z}\psi_j\right)^{n+r-s} \right\}_{q\times r}
		\end{array}
		\right].\label{eqD2}
	\end{align}
	\hrulefill
\end{figure*}
\end{proposition}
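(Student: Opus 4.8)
The plan is to evaluate the expectation in (\ref{eqRtau_appx}) exactly, in two stages — first integrating out the Haar matrix $\mb{T}$, then the Rician matrix $\mb{X}$ — and finally to repackage the resulting finite sum over elementary symmetric polynomials as the single-determinant expressions (\ref{eqD1})--(\ref{eqD2}). First I would invoke the Haar-averaging identity that is a building block of the HMI technique of \cite{ZhengTCOM2019}: for any fixed Hermitian $\mb{A}$ and $\mb{B}$ and a Haar-distributed $\mb{T}\in\mc{U}_K$,
\begin{equation}\label{eqHaarLemma}
\mbb{E}_{\mb{T}}\bigl\{\bigl|\mb{I}_K+\mb{A}\mb{T}\mb{B}\mb{T}^\dagger\bigr|\bigr\}=\sum_{k=0}^{K}\binom{K}{k}^{-1}e_k(\mb{A})\,e_k(\mb{B}),
\end{equation}
where $e_k(\cdot)$ is the $k$-th elementary symmetric function of the eigenvalues; this follows by diagonalising $\mb{A}$, expanding the determinant into a sum of principal minors, and noting that a $k\times k$ principal submatrix of $\mb{T}\mb{B}\mb{T}^\dagger$ has a law depending only on $k$, so its expected determinant is $e_k(\mb{B})/\binom{K}{k}$. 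Applying (\ref{eqHaarLemma}) with $\mb{A}=\tfrac{\bar{\gamma}}{z}\mb{\Psi}$, whose eigenvalues are $\tfrac{\bar{\gamma}}{z}\psi_1,\ldots,\tfrac{\bar{\gamma}}{z}\psi_r,0,\ldots,0$, and $\mb{B}=\mb{X}^\dagger\mb{X}$, whose nonzero eigenvalues I write as $\bm{\lambda}=(\lambda_1,\ldots,\lambda_s)$, and using $e_k(\mb{A})=(\bar{\gamma}/z)^k e_k(\bm{\psi})$ (which vanishes for $k>r$) and $e_k(\mb{B})=e_k(\bm{\lambda})$ (which vanishes for $k>s$), I would reach
\begin{equation}\label{eqStage1}
R_U(\bm{\psi},\bm{p}_u)=\log\sum_{k=0}^{\min(r,s)}\frac{(\bar{\gamma}/z)^k}{\binom{K}{k}}\,e_k(\bm{\psi})\;\mbb{E}_{\mb{X}}\bigl\{e_k(\bm{\lambda})\bigr\}.
\end{equation}

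The second step is to evaluate the non-central Wishart moments $\mbb{E}_{\mb{X}}\{e_k(\bm{\lambda})\}$. Since $\mb{X}=\bar{\mb{H}}_d+\mb{H}_s$ is complex Gaussian with mean $\bar{\mb{H}}_d$ of rank $q$, the matrix $\mb{X}^\dagger\mb{X}$ is a non-central complex Wishart matrix with $s$ nonzero eigenvalues whose joint density has the usual ratio-of-determinants form: a factor $\prod_i\lambda_i^{t-s}e^{-\lambda_i}$ times a Vandermonde times a determinant generated from the non-centrality eigenvalues $\bm{\omega}$ by the classical Harish-Chandra--Itzykson--Zuber (HCIZ) unitary-group integral, taken in the confluent form that remains valid when the $\omega_j$ coincide or vanish (here $K-q$ of them vanish). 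Expanding $e_k(\bm{\lambda})$ as a sum of products of $k$ eigenvalues and integrating term by term against this density (Andr\'eief's identity) turns $\mbb{E}_{\mb{X}}\{e_k(\bm{\lambda})\}$ into a single $s\times s$ determinant whose $s-q$ columns tied to vanishing $\omega$'s yield the pure Gamma moments $a_{i,j}=\Gamma(t-s+i+j-1)$, while the $q$ columns tied to the nonzero $\omega_j$ yield the truncated-exponential entries $b_{i,j}$; the normalising constant of the density supplies the $\Gamma$-function factors that will end up in $g_0$, and the index $k$ records how many $\lambda$-columns carry an extra power of $\lambda$.

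The third step is to substitute these determinants back into (\ref{eqStage1}) and consolidate the sum over $k$. Using the determinantal representation of $e_k(\bm{\psi})$ (a copy of the Vandermonde $\mb{V}_r(\bm{\psi})$ with the degrees of its top $k$ columns raised by one) together with multilinearity and Laplace expansion of a determinant in its columns, the weighted sum $\sum_k\binom{K}{k}^{-1}(\bar{\gamma}/z)^k e_k(\bm{\psi})\,\mbb{E}_{\mb{X}}\{e_k(\bm{\lambda})\}$ consolidates into $|\mb{D}(\bm{\psi},z)|/|\mb{V}_r(\bm{\psi})|$ up to (i) a fixed power of $\bar{\gamma}/z$ traceable to the degrees $0,1,\ldots,r-1$ of the Vandermonde, which is exactly the $\tfrac{r(r-1)}{2}\log(z/\bar{\gamma})$ term, and (ii) a fixed product of $\binom{K}{\cdot}$ and $\Gamma(\cdot)$ factors which, with the density normaliser, assembles into $g_0$; the weights $\binom{K}{s-n}$ and the range of the inner summation variable $n$ in (\ref{eqD1})--(\ref{eqD2}) are precisely what survives this bookkeeping. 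The two-case statement is forced at this point: for $s\ge r$ the sum is truncated by $e_k(\bm{\psi})=0$, $k>r$, giving the block matrix (\ref{eqD1}), whereas for $s<r$ it is truncated by $e_k(\bm{\lambda})=0$, $k>s$, and $\mb{D}$ acquires the extra $r-s$ rows of pure $\psi$-powers in (\ref{eqD2}). Taking logarithms and separating the $(\bm{\psi},z)$-dependent terms then gives $R_U=g_0+g_1(\bm{\psi},z)$.

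I expect the third step to be the main obstacle: reorganising the symmetric-polynomial sum into the precise determinants (\ref{eqD1})--(\ref{eqD2}), keeping every $\binom{K}{\cdot}$, every $\Gamma(\cdot)$, and every power of $\bar{\gamma}/z$ in the correct slot, and making the two truncation regimes reproduce exactly the stated block structures. A secondary technical point is the careful treatment of the degenerate HCIZ determinant in the non-central Wishart density when some $\omega_j$ coincide or are zero, which is the reason the $b_{i,j}$ come out as a finite series rather than as a closed-form ${}_0F_1$.
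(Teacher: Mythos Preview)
Your plan is correct and will work, but it follows a genuinely different route from the paper. The paper never expands the Haar integral into a sum over elementary symmetric polynomials: instead it quotes a ready-made determinant identity from \cite{ZhengTCOM2019} for $\mc{I}_1(\mb{\Lambda},\mb{\Gamma})=\int|\mb{I}+\mb{\Lambda}\mb{T}\mb{\Gamma}\mb{T}^\dagger|\,\mr{d}\mb{T}$ (which already has the block structure with columns $\{\lambda_i^{j-1}\}$ and $\{(1+\lambda_i\gamma_j)^K\}$ or its ${}_2F_1$ analogue), writes the non-central Wishart eigenvalue density as a second determinant $\mc{I}_2(\mb{\Lambda})$ via the HCIZ integral (taking the confluent limit for the zero $\omega_j$'s), and then integrates the product of the two determinants over $\bm{\lambda}$ in a single stroke with the generalized Andr\'eief lemma. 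The binomial weights $\binom{K}{s-n}$ in (\ref{eqD1})--(\ref{eqD2}) appear simply by expanding $(1+\lambda\gamma)^K$ inside those columns, and the cancellation of the low-order terms is done at the end by absorbing $\prod_j\gamma_j^{r-s}$ into the determinant and using multilinearity. In effect the paper bypasses what you call the ``main obstacle'' entirely: because it never leaves the determinant world, the consolidation step is automatic. Your approach unwinds $\mc{I}_1$ into the scalar sum (\ref{eqHaarLemma}) and must then re-wind it, which is equivalent but heavier on bookkeeping. Two smaller remarks: the paper actually performs a \emph{three}-way case split ($N\ge K$; $K>N\ge r$; $K>r>N$) because $\mc{I}_1$ has three different determinantal forms in \cite{ZhengTCOM2019}, and only afterwards do the first two collapse into (\ref{eqD1}); and the reason $b_{i,j}$ is a finite series is not the degenerate HCIZ limit but the one-dimensional integral $\int_0^\infty \lambda^{t-s+i-1}e^{-\lambda}\,{}_0F_1(\,;t-s+1;\omega_j\lambda)\,\mr{d}\lambda$, which evaluates to $e^{\omega_j}$ times a truncated polynomial in $\omega_j$.
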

\begin{proof}
  The proof is provided in Appendix~\ref{appxRtau}.
\end{proof}

\subsection{Numerical Validations of Rate Approximations}

In this section, we present numerical justifications for the approximate information rates of both the legitimate and of the eavesdropping channels, respectively. Note that the numerical results in this section are the information rates under arbitrary power allocation $\bm{\psi}$ and under $\bm{p}_u$ without optimization. In Figs.~\ref{figR_error} (a) and (b), we plot the Empirical Cumulative Distribution Functions (ECDFs) of the relative approximation errors $(R_L(\bm{\psi}, \bm{p}_u) - R_0(\bm{\psi},\bm{p}_u))/R_0(\bm{\psi},\bm{p}_u)$ between $R_L$ given in Proposition~\ref{propRL} and the simulated $R_0$. We randomly generate $10^6$ samples of the power allocation vector $\bm{\psi}$ in each simulated case, and $\bm{p}_u$ is selected for showing that the received SNR is 10~dB in Fig.~\ref{figR_error} (a) and 20~dB in Fig.~\ref{figR_error} (b), respectively. For comparison, we also plot the ECDFs of the relative errors when $R_0$ is approximated by $R_U$ in Proposition~\ref{propRtau}. It is clear that for all the MIMO configurations tested, the relative approximation errors are sufficiently small upon using $R_L$, which are on the order of $10^{-2}$ when $K = N = 4$, $10^{-3}$ when $K = 8$ and $N = 4$, and $10^{-4}$ when $K = 4$ and $N = 8$. On the other hand, the approximation errors are significantly higher upon using $R_U$ and may over-estimate $R_0$, which is not suitable for our secrecy rate optimization, since the reliability requirements may become violated.

In Figs.~\ref{figR_error} (c) and (d), under the same simulation settings, we plot the ECDFs of the relative approximation errors of the proposed approximation $R_{U}$ given in Proposition~\ref{propRtau} for the information rate $R_\tau$ when the received SNRs are 10 dB and 20 dB, respectively. Additionally, we compare the proposed approximations to the upper bound in~\cite{McKayTIT2005} by using Jensen's inequality, which, to the best of our knowledge, is the most suitable one for the secrecy rate optimization considered in the open literature, since it actually over-estimates the information rate of the eavesdropping channel and hence it dose not violate the secrecy constraint. We also note that other analytical results of the information rate of Rician MIMO schemes do exist, e.g., see~\cite{McKayCL2006, MatthaiouTCOM2009}. However, these results are either excessively complex for numerical evaluation, or under-estimate the information rate of the eavesdropping channel, which may violate the secrecy constraint. The results show that the proposed approximation (\ref{eqRtau_4}) adequately estimates the exact information rate of the eavesdropping channel at a relative approximation error on the order of $10^{-3}\sim10^{-2}$ for the scenarios considered. Therefore, the numerical results of Fig.~\ref{figR_error} validate that the $R_L$ given in Proposition~\ref{propRL} and $R_U$ given in Proposition~\ref{propRtau} are consistent with our simulations achieving sufficiently accurate agreement. Additionally, the closed-form expressions obtained are explicit in terms of the optimization variables $\bm{\psi}$ and $\bm{p}_u$, which can be therefore applied in our secrecy rate optimization problem.

\begin{figure*}[t!]
  \centering
  \subfigure[]{\includegraphics[width=2.6in]{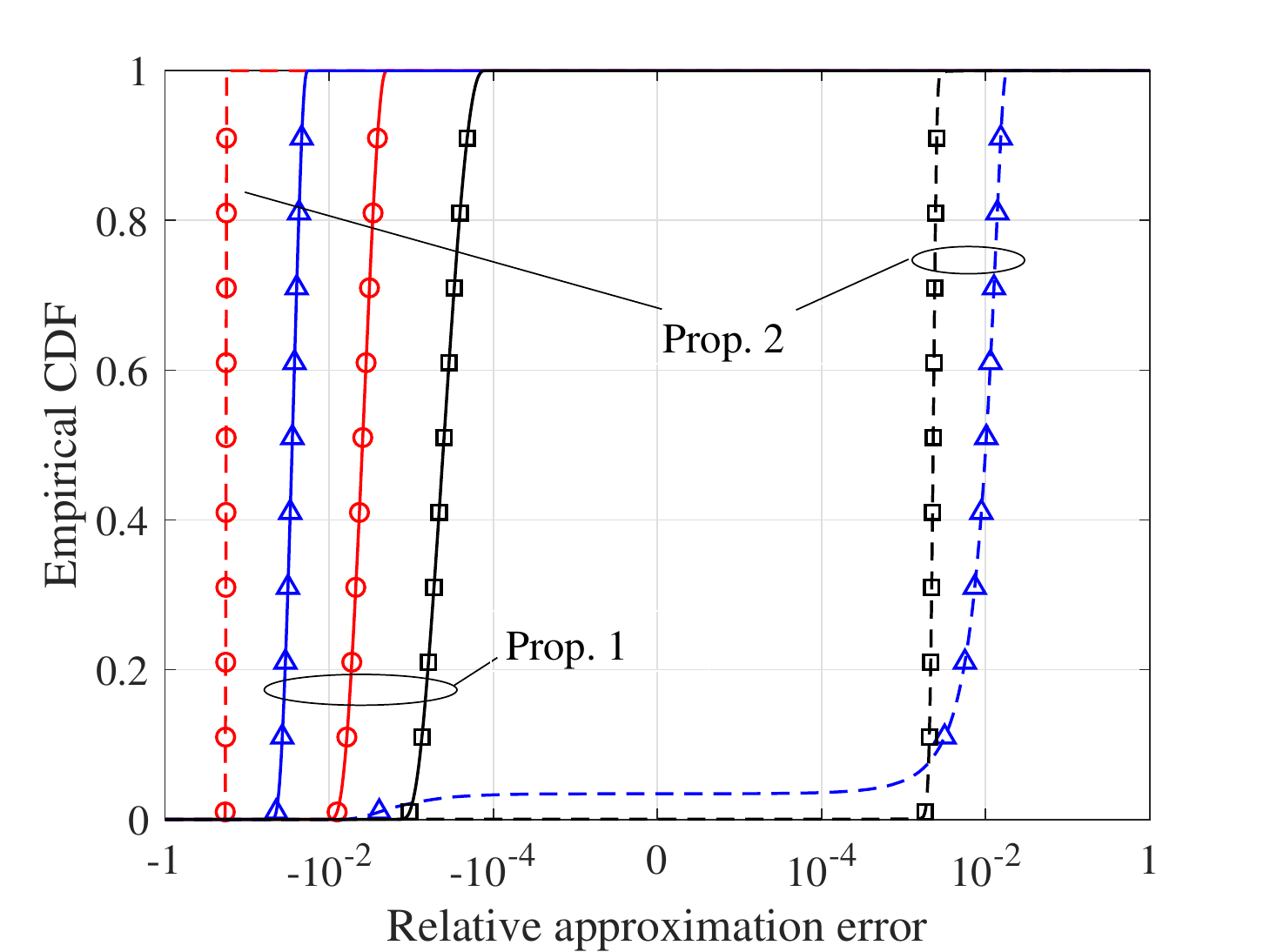}}
  \subfigure[]{\includegraphics[width=2.6in]{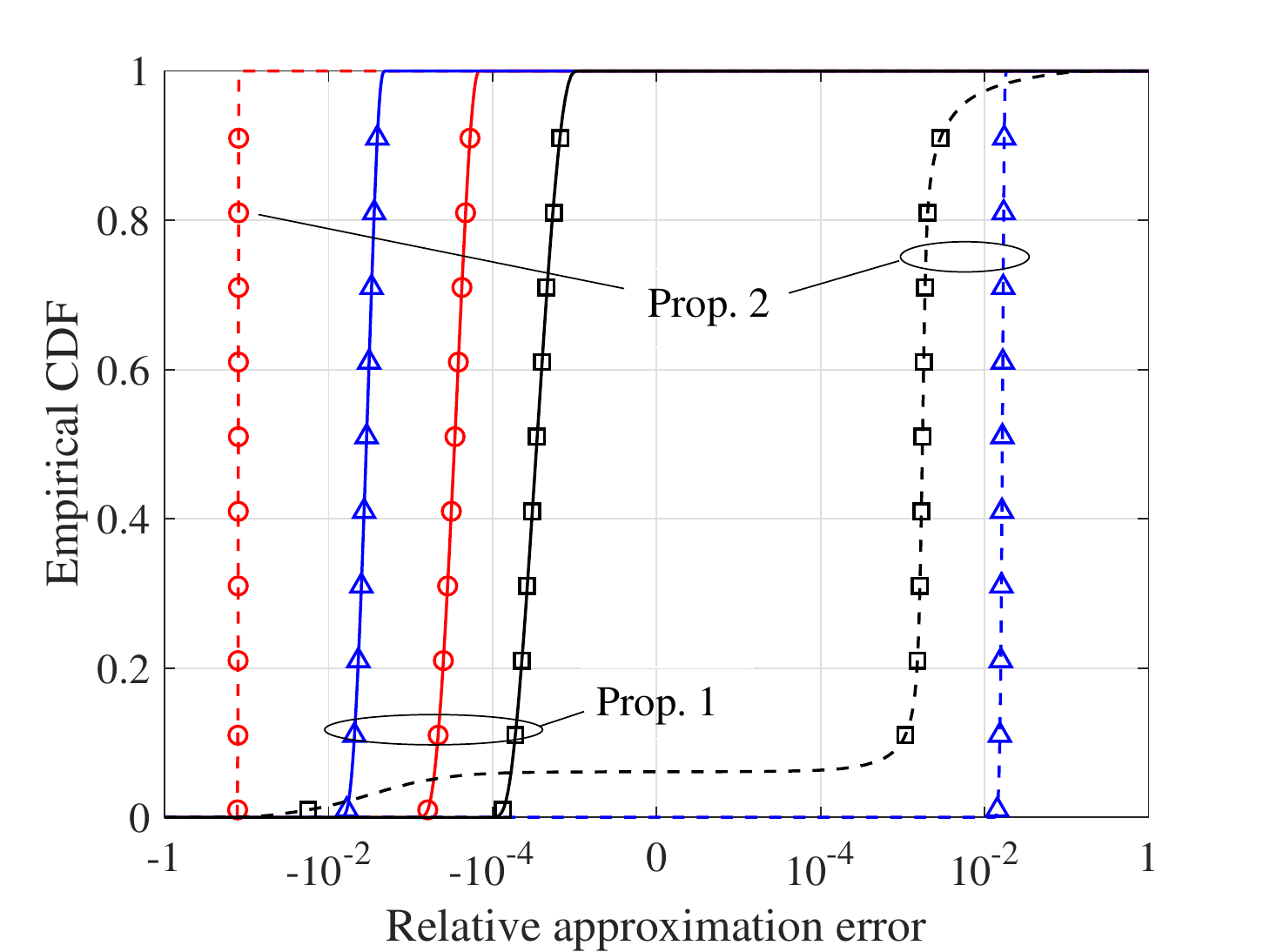}}
  \subfigure[]{\includegraphics[width=2.6in]{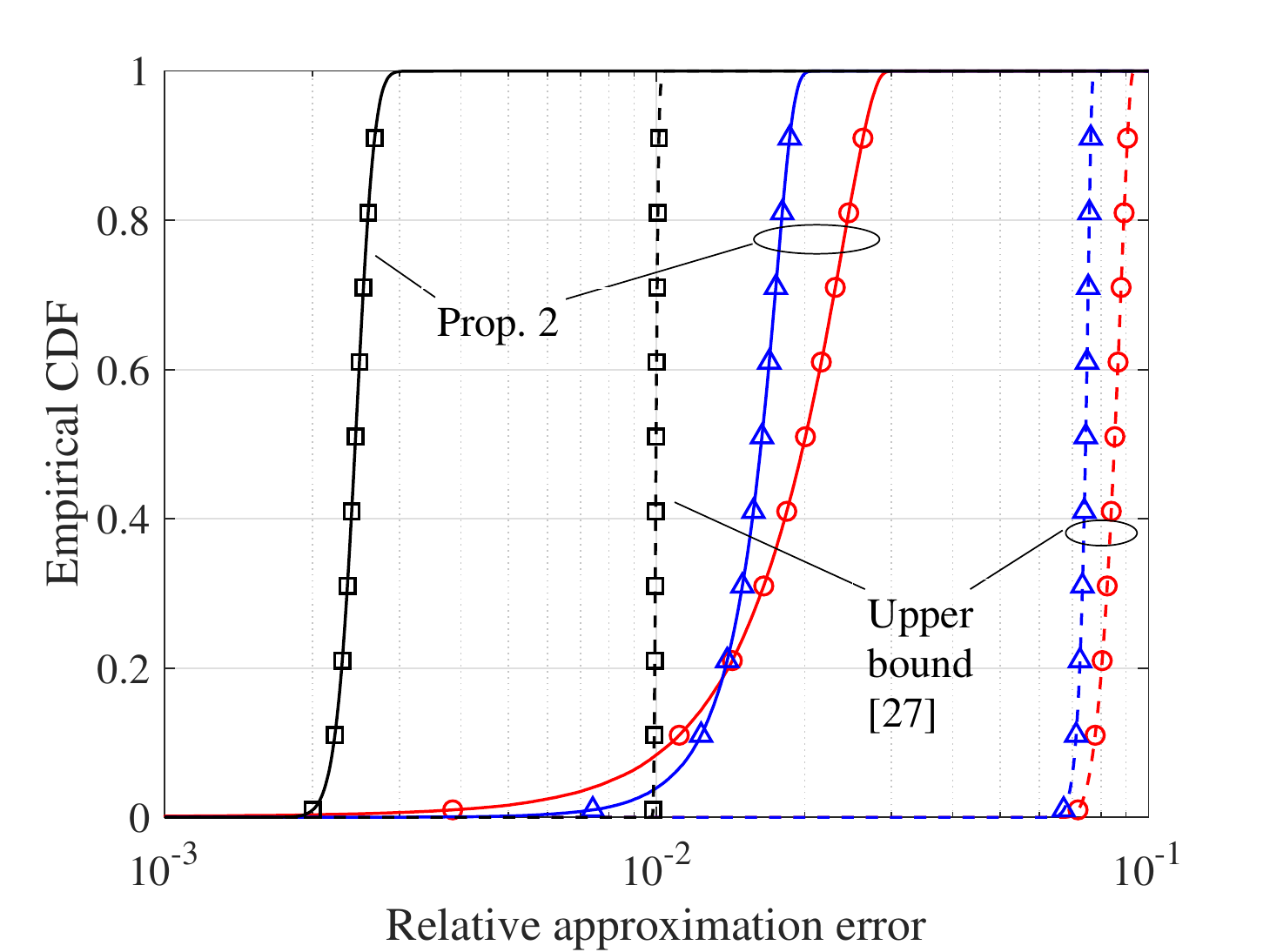}}
  \subfigure[]{\includegraphics[width=2.6in]{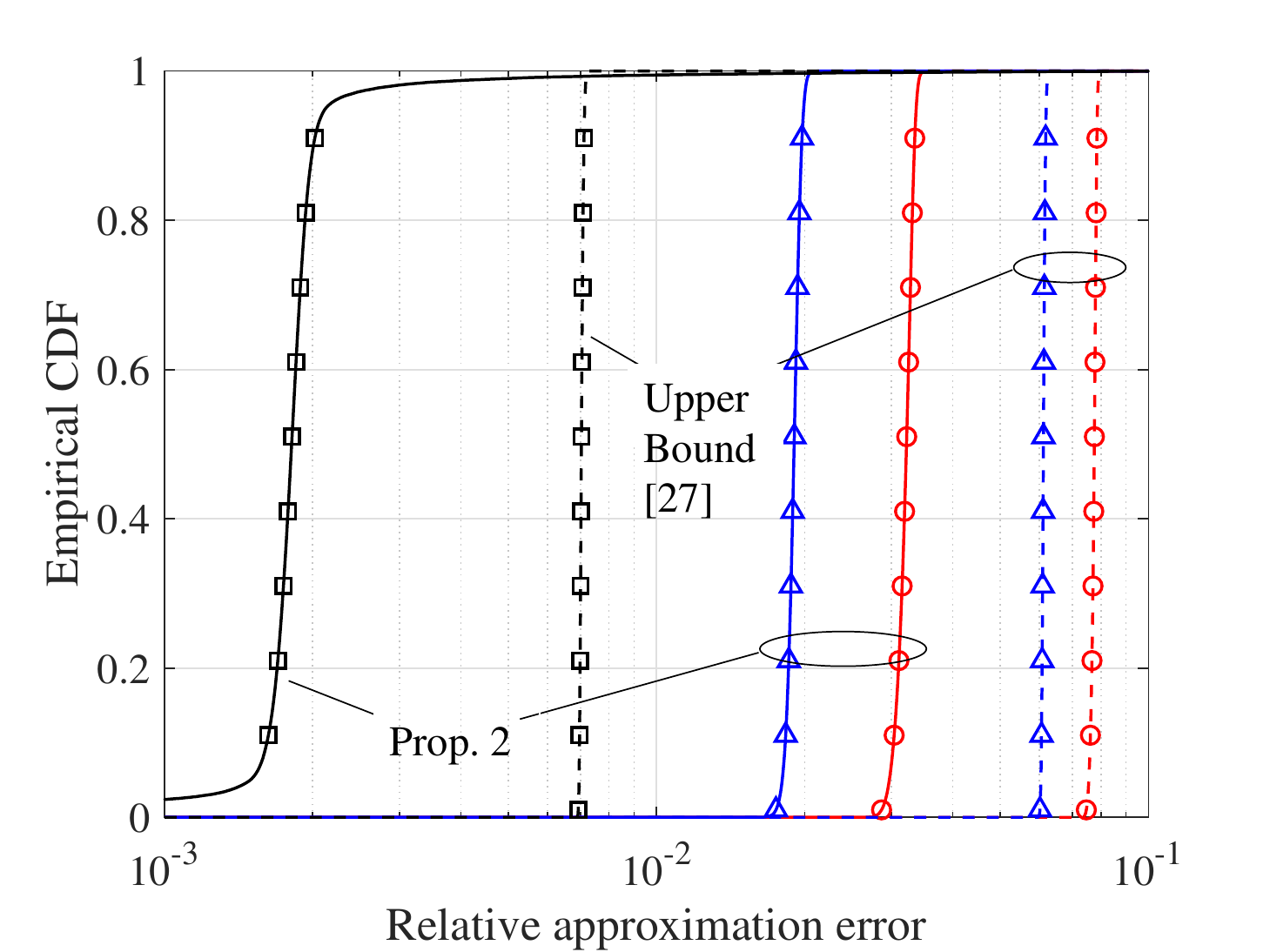}}
  \caption{Cumulative distribution function of relative errors between the approximate information rates and the simulated rates. \lq$\circ$\rq: $K = 8$ and $N = 4$; \lq$\triangle$\rq: $K = N = 4$; \lq$\square$\rq: $K = 4$ and $N = 8$. (a) and (b): ECDFs of approximation errors of $R_0$ at SNRs 10~dB and 20~dB; (c) and (d): ECDFs of approximation errors of $R_1$ at SNRs 10~dB and 20~dB.}
  \label{figR_error}
\end{figure*}

\section{Secrecy Rate Optimizations}\label{secOpt}
In this section, we conceive an alternating algorithm for the joint UAV transceiver and location optimization, where the power allocation vector $\bm{\psi}$ and the hovering location $\bm{p}_u$ are optimized iteratively. Since the objective function of the problem (\ref{eqRsec}) does not have closed-form expressions, we adopt the information rate derived in Section~\ref{secRate} for the legitimate and the eavesdropping channels to construct the approximation of the objective function. Specifically, $R_0$ and $R_\tau$, $1\le \tau \le T$, in (\ref{eqRsec}) are replaced by $R_L$ of (\ref{eqR0_d}) and $R_{U,\tau}$ of (\ref{eqRtau_4}), respectively, yielding the following problem:
    \begin{equation}
	\begin{aligned}\label{eqRsec_appx}
	& \max_{ \bm{\psi}\in\mbb{R}_{+}^K,\  \bm{p}_u } \left\{ R_L(\bm{\psi}, \bm{p}_u) - \max_{1\le\tau\le T} R_{U,\tau}(\bm{\psi}, \bm{p}_u) \right\}^+, \\
& \quad\quad \mathrm{s.t.}\ \sum_{k=1}^K \psi_k\le 1,\ || \bm{p}_u - \bm{p}_u^{(1)} ||\le d_{\max}.
	\end{aligned}
    \end{equation}
Note that the objective function of (\ref{eqRsec_appx}) is an estimate of the objective function's lower bound in (\ref{eqRsec}) for any arbitrary power allocation $\bm{\psi}$ and location $\bm{p}_u$. Hence, the optimum of (\ref{eqRsec_appx}) is an estimate of the secrecy rate's lower bound.

The optimization of (\ref{eqRsec_appx}) over $\bm{\psi}$ and $\bm{p}_u$ is still difficult due to the following pair of reasons: (1) The optimization variables $\bm{\psi}$ and $\bm{p}_u$ are coupled, since the optimal transmit signal is determined by the spatial structure of the wireless channel, especially when the location-dependent LoS component dominates the end-to-end propagation channel; (2) For each variable $\bm{\psi}$ and $\bm{p}_u$, the problem considered in (\ref{eqRsec_appx}) is non-convex. That is, for $\bm{\psi}$ the objective function of (\ref{eqRsec_appx}) is given by the difference of two concave functions, while for $\bm{p}_u$ the convexity of the objective function is indetermined. To address these problems, we adopt the alternating optimization framework that iteratively finds the local optima of $\bm{\psi}$ and $\bm{p}_u$ by fixing the other. In the $\bm{\psi}$-optimizing subproblem, $R_{U,\tau}(\bm{\psi},\bm{p}_u)$ of (\ref{eqRsec_appx}) is expanded via the first-order Taylor expansion around a given $\bm{\psi}$, thus converting the objective function into a concave function of $\bm{\psi}$. In the $\bm{p}_u$-optimizing subproblem, both $R_L(\bm{\psi},\bm{p}_u)$ and $R_{U,\tau}(\bm{\psi},\bm{p}_u)$ are expanded via the Taylor expansion, thus converting the objective function into an affine function with respect to $\bm{p}_u$. Both subproblems can be solved by standard optimization toolkits, such as CVX~\cite{grant2013}. By updating the expanded points $\bm{\psi}$ and $\bm{p}_u$ of the Taylor series, the proposed algorithm is expected to converge to a local optimum of (\ref{eqRsec_appx}).

\textbf{Step 1: Update $\bm{\psi}$.} Given the $n$-th updates of $\bm{\psi}^{(n)}$ and $\bm{p}_u^{(n)}$, the evolution from $\bm{\psi}^{(n)}$ to $\bm{\psi}^{(n+1)}$ can be converted into a series of convex optimization subproblems formulated as:
\begin{align} \label{eqRsec_psi_sca}
	\bm{\phi}^{(m+1)} = &\argmax_{ \bm{\phi}\in\mbb{R}_{+}^K }\left\{ \mc{R}_1\left(\bm{\phi}, \bm{p}_u^{(n)}\middle| \bm{\phi}^{(m)}\right)   \right\}, \nonumber \\
	&\textrm{s.t.}\ \sum_{i=1}^K \phi_{i}\le 1,
\end{align}
where $\mc{R}_1\left(\bm{\phi},\bm{p}_u^{(n)}\middle|\bm{\phi}^{(m)}\right)\!=\! R_L\left(\bm{\phi}, \bm{p}_u^{(n)}\right)\! -\! \max\limits_{1\le\tau\le T}\! R_{U,\tau}\left(\bm{\phi}, \bm{p}_u^{(n)}\middle| \bm{\phi}^{(m)} \right)$, $\bm{\phi}^{(1)} = \bm{\psi}^{(n)}$ and $\bm{\phi}^{(m^*)} = \bm{\psi}^{(n+1)}$ are the initial and the optimized values of the $n$-th update of the power allocation vector, and $m^{*}$ denotes the number of iterations of $\bm{\phi}$ needed in the $n$-th update. The function $R_{U,\tau}\left(\bm{\phi}, \bm{p}_u^{(n)}\middle| \bm{\phi}^{(m)} \right)$ is the first-order Taylor expansion of $R_{U,\tau}\left(\bm{\phi}, \bm{p}_u^{(n)} \right)$ at the point $\bm{\phi}^{(m)}$, which is expressed as:
\begin{align}
R_{U,\tau} & \left(\bm{\phi}, \bm{p}_u^{(n)}\middle| \bm{\phi}^{(m)} \right) = R_{U,\tau}\left(\bm{\phi}^{(m)}, \bm{p}_u^{(n)} \right) \nonumber \\
& + \left[ \nabla_{\bm{\phi}} R_{U,\tau}\left(\bm{\phi}, \bm{p}_u^{(n)} \right) \middle| {}_{\bm{\phi} = \bm{\phi}^{(m)}} \right]^\mathrm{T}\left[\bm{\phi} - \bm{\phi}^{(m)}\right],\label{eqRtau_taylor}
\end{align}
where $\nabla_{\bm{\phi}} R_{U,\tau}\left(\bm{\phi}, \bm{p}_u^{(n)} \right) = \left[\frac{\partial}{\partial\phi_{1}} R_{U,\tau}\left(\bm{\phi}, \bm{p}_u^{(n)} \right), \ldots, \frac{\partial}{\partial\phi_{K}} R_{U,\tau}\left(\bm{\phi}, \bm{p}_u^{(n)} \right)\right]^{\mathrm{T}}$ denotes the gradient with respect to $\bm{\phi}$. After applying Jacobi's formula \cite{Magnus1999}, the derivative $\frac{\partial}{\partial \phi_{j}} R_{U,\tau}\left(\bm{\phi}, \bm{p}_u^{(n)} \right)$ is obtained as $\frac{\partial}{\partial \phi_{j}} R_{U,\tau}\left(\bm{\phi}, \bm{p}_u^{(n)} \right) = \mathrm{Tr}\left[\mb{D}\left(\bm{\phi},z_\tau^{(n)}\right)^{-1}\frac{\partial}{\partial \phi_{j}}\mb{D}\left(\bm{\phi},z_\tau^{(n)}\right)\right] - \mathrm{Tr}\left[\mb{V}_r(\bm{\phi})^{-1}\frac{\partial}{\partial \phi_{j}}\mb{V}_r(\bm{\phi})\right]$, where $z_\tau^{(n)} = ||\bm{p}_u^{(n)} - \bm{p}_\tau||^\alpha$. The matrix derivative $\frac{\partial}{\partial \phi_{j}}\mb{V}_r(\bm{\phi})$ is calculated by taking the derivative of each of the matrix elements with respect to $\phi_{j}$ and it is therefore given by $\left[\frac{\partial}{\partial \phi_{j}}\mb{V}_r(\bm{\phi})\right]_{k,l} = (k-1)\phi_{l}^{k-2}$ when $l = j$ and $1\le k\le r$, and equals $0$ otherwise.

Similarly, when $s\ge r$ and $r>s$, the matrix derivatives $\frac{\partial}{\partial \phi_{j}}\mb{D}\left(\bm{\phi},z_\tau^{(n)}\right)$ are calculated by (\ref{S_derivation}) and (\ref{r_derivation}), respectively, and are shown on top of the next page.
\begin{figure*}[tb]
\begin{align} \label{S_derivation}
\left[\frac{\partial}{\partial \phi_{j}}\mb{D}\left(\bm{\phi},z_\tau^{(n)}\right)\right]_{k,l} &=
\begin{cases}
\sum\limits_{i=1}^r {K\choose r-i} \frac{i\, a_{i+s-r+1,k}}{\phi_{l+r-s}^{1-i}}\left(\frac{\bar{\gamma}}{z_\tau^{(n)}}\right)^{i}, & l = s-r+j, 1\le k\le s-q;\\
\sum\limits_{i=1}^r {K\choose r-i} \frac{i\, b_{i+s-r+1,k-s+q}}{\phi_{l+r-s}^{1-i}}\left(\frac{\bar{\gamma}}{z_\tau^{(n)}}\right)^{i}, & l = s-r+j, s-q+1\le k\le s;\\
0, & \mathrm{otherwise}.
\end{cases}
\end{align}
\begin{align} \label{r_derivation}
\left[\frac{\partial}{\partial \phi_{j}}\mb{D}\left(\bm{\phi}, z_\tau^{(n)}\right)\right]_{k,l} &= \begin{cases}
(k-1)\left(\frac{\bar{\gamma}}{z_\tau^{(n)}}\right)^{k-1}\phi_{l}^{k-2}, & l=j, 1\le k\le r-s;\\
\sum\limits_{i=r-s}^r {K\choose r-i} \frac{i\, a_{i+s-r+1,k+s-r}}{\phi_{l}^{1-i}}  \left(\frac{\bar{\gamma}}{z_\tau^{(n)}}\right)^{i}, & l = j, r-s+1\le k\le r-q;\\
\sum\limits_{i=r-s}^r {K\choose r-i} \frac{i\, b_{i+s-r+1,k+q-r}}{\phi_{l}^{1-i}} \left(\frac{\bar{\gamma}}{z_\tau^{(n)}}\right)^{i}, & l = j, r-q+1\le k\le r;\\
0, & \mathrm{otherwise}.
\end{cases}
\end{align}
\hrulefill
\end{figure*}
Since the first-order Taylor expansion in (\ref{eqRtau_taylor}) is affine with respect to the variable $\bm{\phi}$, the subproblem (\ref{eqRsec_psi_sca}) is concave and can be solved by the CVX package.

\textbf{Step 2: Update $\bm{p}_u$.} Following similar procedures as in Step 1, the update of $\bm{p}_u$ can also be determined by optimizing the Taylor-expanded version of (\ref{eqRsec_appx}). Since both $R_L$ and $R_{U,\tau}$ are non-convex with respect to $\bm{p}_u$, given $\bm{\psi}^{(n+1)}$ and $\bm{p}_u^{(n)}$, the $n$-th update $\bm{p}_u^{(n+1)}$ is obtained by
	\begin{align}\label{eqRsec_pb_sca}
	\bm{p}_u^{(n+1)} & = \argmax_{ \bm{p}_u }\left\{ \mc{R}_2\left(\bm{\psi}^{(n+1)}, \bm{p}_u  \middle| \bm{p}_u^{(n)} \right) \right\}, \nonumber \\
& \mathrm{s.t.}\ || \bm{p}_u - \bm{p}_u^{(1)} ||\le d_{\max},\ ||\bm{p}_u - \bm{p}_u^{(n)}||\le d_\Delta,
	\end{align}
where $\mc{R}_2\left(\bm{\psi}^{(n+1)}, \bm{p}_u  \middle| \bm{p}_u^{(n)} \right) = R_L\left(\bm{\psi}^{(n+1)},\bm{p}_u \middle| \bm{p}_u^{(n)}\right) - \max_{1\le\tau\le T} R_{U,\tau}\left(\bm{\psi}^{(n+1)},\bm{p}_u \middle| \bm{p}_u^{(n)}\right)$. Here, the functions $R_L\left(\bm{\psi}^{(n+1)},\bm{p}_u \middle| \bm{p}_u^{(n)}\right)$ and $R_{U,\tau}\left(\bm{\psi}^{(n+1)},\bm{p}_u\middle| \bm{p}_u^{(n)} \right)$ are given by
\begin{align}
	& R_X\left(\bm{\psi}^{(n+1)},\bm{p}_u \middle| \bm{p}_u^{(n)}\right) = R_X\left(\bm{\psi}^{(n+1)},\bm{p}_u^{(n)} \right) \nonumber \\
& + \left[\nabla_{\bm{p}_u} R_X\left(\bm{\psi}^{(n+1)},\bm{p}_u \right)\middle|_{\bm{p}_u = \bm{p}_u^{(n)}}\right]^{\mathrm{T}}\left(\bm{p}_u - \bm{p}_u^{(n)}\right),
\end{align}
where $X\in\{L,U\}$ and $\nabla_{\bm{p}_u} R_X\left(\bm{\psi}^{(n+1)},\bm{p}_u \right) = \frac{\partial}{\partial z_{\tau}} R_X(\bm{\psi}^{(n+1)},\bm{p}_u)\cdot\frac{\partial z_\tau}{\partial \bm{p}_u} = 2\left[\frac{\partial}{\partial z_{\tau}} R_X(\bm{\psi}^{(n+1)},\bm{p}_u)\right](\bm{p}_u - \bm{p}_\tau)$. Recalling $R_L$ given in (\ref{eqR0_d}), its derivative with respect to $z_0$ is calculated as $\frac{\partial}{\partial z_{0}} R_L\left(\bm{\psi}^{(n+1)},\bm{p}_u\right) = - r \frac{\bar{\gamma} E\left(\widehat{\mb{\Psi}}^{(n+1)}\right) }{z_0^2 + z_0\bar{\gamma} E\left(\widehat{\mb{\Psi}}^{(n+1)}\right) }$, where $\widehat{\mb{\Psi}}^{(n+1)} = \mathrm{diag}\left(\left[\psi_1^{(n+1)},\ldots,\psi_r^{(n+1)}\right]^{\mathrm{T}}\right)$. The derivative $\frac{\partial}{\partial z_\tau}R_{U,\tau}\left(\bm{\psi}^{(n+1)},\bm{p}_u\right)$ is given by
\begin{align}
	& \frac{\partial}{\partial z_\tau}R_{U,\tau}\left(\bm{\psi}^{(n+1)},\bm{p}_u\right) = \frac{r(r-1)}{2 z_\tau} + \nonumber \\
&\qquad \mathrm{Tr}\left[\mb{D}\left(\bm{\psi}^{(n+1)}, z_\tau\right)^{-1}\frac{\partial}{\partial z_\tau}\mb{D}\left(\bm{\psi}^{(n+1)},z_\tau\right)\right],
\end{align}
where the derivatives $\frac{\partial}{\partial z_\tau}\mb{D}\left(\bm{\psi}^{(n+1)},z_\tau\right)$ are calculated by (\ref{S_derivation_z}) and (\ref{r_derivation_z}), when $r\le s$ and $r>s$, respectively, and are shown on top of the next page.
\begin{figure*}[tb]
\begin{align} \label{S_derivation_z}
\left[\frac{\partial}{\partial z_\tau}\mb{D}\left(\bm{\psi}^{(n+1)},z_\tau\right)\right]_{k,l} &=
\begin{cases}
\sum\limits_{i=1}^r {K\choose r-i} \frac{i\, a_{i+s-r+1,k}}{-z_\tau^{i+1}(\bar{\gamma}\psi_{l+r-s}^{(n+1)})^{-i}}, & s-r+1\le l\le s, 1\le k\le s-q;\\
\sum\limits_{i=1}^r {K\choose r-i} \frac{i\, b_{i+s-r+1,k-s+q}}{-z_\tau^{i+1}(\bar{\gamma}\psi_{l+r-s}^{(n+1)})^{-i}}, & s-r+1\le l\le s, s-q+1\le k\le s;\\
0, & \mathrm{otherwise}.
\end{cases}
\end{align}
\begin{align} \label{r_derivation_z}
	\left[\frac{\partial}{\partial z_\tau}\mb{D}\left(\bm{\psi}^{(n+1)},z_\tau\right)\right]_{k,l} &= \begin{cases}
	-(k-1)\frac{\left(\bar{\gamma}\psi_{l}^{(n+1)}\right)^{k-1}}{z_{\tau}^{k}}, & 1\le l\le r, 1\le k\le r-s;\\
	\sum\limits_{i=r-s}^r {K\choose r-i} \frac{i\, a_{i+s-r+1,k+s-r}}{-z_{\tau}^{i+1}\left(\bar{\gamma} \psi_{l}^{(n+1)}\right)^{-i}}, & 1\le l\le r, r-s+1\le k\le r-q;\\
	\sum\limits_{i=r-s}^r {K\choose r-i} \frac{i\, b_{i+s-r+1,k+q-r}}{-z_{\tau}^{i+1}\left(\bar{\gamma} \psi_{l}^{(n+1)}\right)^{-i}}, & 1\le l\le r, r-q+1\le k\le r;\\
	0, & \mathrm{otherwise}.
	\end{cases}
\end{align}
\hrulefill
\end{figure*}

Note that an additional constraint $||\bm{p}_u - \bm{p}_u^{(n)}||\le d_\Delta$ is added in (\ref{eqRsec_pb_sca}) to avoid overly large displacement of the UAV between two consecutive updates, which may cause inaccurate approximation of the Taylor expansion. Since the objective function of (\ref{eqRsec_pb_sca}) is an affine function with respect to $\bm{p}_u$, the problem (\ref{eqRsec_pb_sca}) can also be solved by standard optimization packages.

Finally, the optimizations of $\bm{\psi}$ and $\bm{p}_u$ are alternated until convergence is achieved. The iterative power optimization and the complete joint optimization procedures of the power and the location are summarized in Algorithms~\ref{alg_power} and~\ref{alg_alter}, respectively. The initial power allocation $\bm{\psi}^{(1)} = \bm{\psi}_{\mathrm{WF}}$ is selected as the water-filling solution over $\mb{H}_{d,0}$ of the main LoS channel, and the location variable $\bm{p}_u^{(1)} = \bm{p}_u^{\mathrm{start}}$ is the initial dispatch location of the UAV. Both Algorithms~\ref{alg_power} and~\ref{alg_alter} are terminated when the changes of the objective functions between the consecutive steps are lower than thresholds $\epsilon_1$ and $\epsilon_2$, respectively.

Since both of Algorithms 1 and 2 rely on the CVX toolbox, which solves convex problems via the interior point method (IPM), the computation complexity of these algorithms can be explicitly analyzed by leveraging the analytical treatment in~[52]. Specifically, the power allocation problem (28) in line 3 of Algorithm 1 can be approximated as a conic quadratic programming (CQP) problem, where the complexity scales as $\mathcal{O}\left(T \sqrt{m+1}\left(n^{3}+n^{2} m+n m\right)\right)$ with $n = m = K$. The location optimization problem (32) in line 4 of Algorithm 2 has an affine objective function and conic quadratic constraint, hence it is also a CQP, where the computational complexity scales as $\mathcal{O}\left(T(m+n)^{3 / 2} n^{2}\right)$, with $m = 1$ and $n = 3$. Denoting the numbers of iterations in Algorithms 1 and 2 as $L_1$ and $L_2$, respectively, while retaining only the highest order term, the overall complexity of Algorithm 2 scales as $\mc{O}(L_1 L_2 T K^{3.5})$, which is of polynomial order.

\begin{algorithm}[t]
	\caption{Iterative power optimization}\label{alg_power}
	\begin{algorithmic}[1]
		\State \textbf{set} $m = 1$, $\bm{\phi}^{(1)} = \bm{\psi}^{(n)}$, $\epsilon_1>0$
		\State \textbf{repeat}
		\State\qquad Solve the subproblem (\ref{eqRsec_psi_sca}) via CVX and set the output as $\bm{\phi}^{(m+1)}$;
		\State\qquad $m = m+1$;
		\State \textbf{until} $\mc{R}_1\left(\bm{\phi}^{(m+1)}, \bm{p}_u^{(n)}\middle| \bm{\phi}^{(m)}\right) - \mc{R}_1\left(\bm{\phi}^{(m)}, \bm{p}_u^{(n)}\middle| \bm{\phi}^{(m-1)}\right)<\epsilon_1$;
		\State \textbf{set} $\bm{\psi}^{(n+1)} = \bm{\phi}^{(m)}$.
	\end{algorithmic}
\end{algorithm}

\begin{algorithm}[t]
	\caption{Alternating power and deployment optimization}\label{alg_alter}
	\begin{algorithmic}[1]
		\State \textbf{set} $n = 1$, $\bm{\psi}^{(1)} = \bm{\psi}_{\mathrm{WF}}$, $\bm{p}_u^{(1)} = \bm{p}_u^{\mathrm{start}}$, $\epsilon_2>0$
		\State \textbf{repeat}
		\State\qquad Update $\bm{\psi}^{(n+1)}$ as the output of Algorithm~\ref{alg_power};
		\State\qquad Update $\bm{p}_u^{(n+1)}$ by solving the subproblem (\ref{eqRsec_pb_sca}) via CVX;
		\State\qquad $n = n + 1$;
		\State \textbf{until} $\mc{R}_2\left(\bm{\psi}^{(n)}, \bm{p}_u^{(n)}  \middle| \bm{p}_u^{(n)} \right) - \mc{R}_2\left(\bm{\psi}^{(n-1)}, \bm{p}_u^{(n-1)}  \middle| \bm{p}_u^{(n-1)} \right) \le \epsilon_2$.
	\end{algorithmic}
\end{algorithm}

\section{Numerical Results}\label{secResult}

In this section, numerical simulations are conducted for quantifying the impact of the number of antennas, the locations of the source UAV, the destination, and the eavesdroppers on the achievable secrecy rate via the joint optimization of the transceiver and the UAV location. In the following, the eavesdroppers are placed closer to the initial location of the UAV, than the destination. According to the assumed locations of ground users and the measurements of the Rician factor of the U2G channels given in~\cite{Survey_channel1, Measure_channel2}, the Rician factors range from $2$ to $12$ in the simulations. Besides, the path loss exponent in our considered U2G channel is set as $\alpha = 2.5$. On the other hand, the UAV and/or the destination are equipped with more antennas than the eavesdroppers in order to achieve positive secrecy rate. Thereafter, the number of antennas at the destination is set to $N_0 = 4$ and the number of antennas at the eavesdroppers is set to $N_1 = \ldots = N_T = 2$, while the number of antennas at the UAV is set to $K = 2$, $4$, or $6$, respectively. At the UAV, the antenna elements are circularly arranged with a radius $10\lambda$, where $\lambda = 0.06$ meter is the wavelength of the 5~GHz carrier frequency. The antennas at each receiver are linearly arranged with $2\lambda$ inter-antenna spacing and randomly orientated.

Fig.~\ref{figRateIter} illustrates the iterations of the secrecy rate achieved at a fixed UAV location by using Algorithm~\ref{alg_power}, when the number of transmit antennas at the UAV is $K = 4$ and $6$. In both cases, we assume that the UAV is placed at three different locations, i.e., $\bm{p}_u = [0,0,10]^\mathrm{T}$, $[0,5,10]^{\mathrm{T}}$, and $[0,10,10]^\mathrm{T}$, respectively, while the destination is located at $\bm{p}_0 = [20,0,0]^\mathrm{T}$, and the eavesdropper is at $\bm{p}_1 = [10,0,0]^{\mathrm{T}}$. In all the cases considered, the convergence of the secrecy rate can be achieved in 4 to 6 iterations. Interestingly, the rate of convergence is faster when $K = 6$, which suggests that secrecy can be realized easier, when the channel's spatial DoF is higher. Additionally, since the LoS component is topology-dependent, the location of the UAV also has a significant impact on the secrecy rate. In particular, when the three nodes are aligned along a horizontal line, the legitimate and the eavesdropping channels are highly correlated, thus resulting in the lowest secrecy rate shown by the solid lines in Fig.~\ref{figRateIter}. As the UAV moves upwards, the two channels gradually become decorrelated and the secrecy rate eventually achieves $60\%$ improvement when $K = 4$, and $20\%$ improvement when $K = 6$.

\begin{figure}[t!]
	\centering
	\includegraphics[width=2.8in]{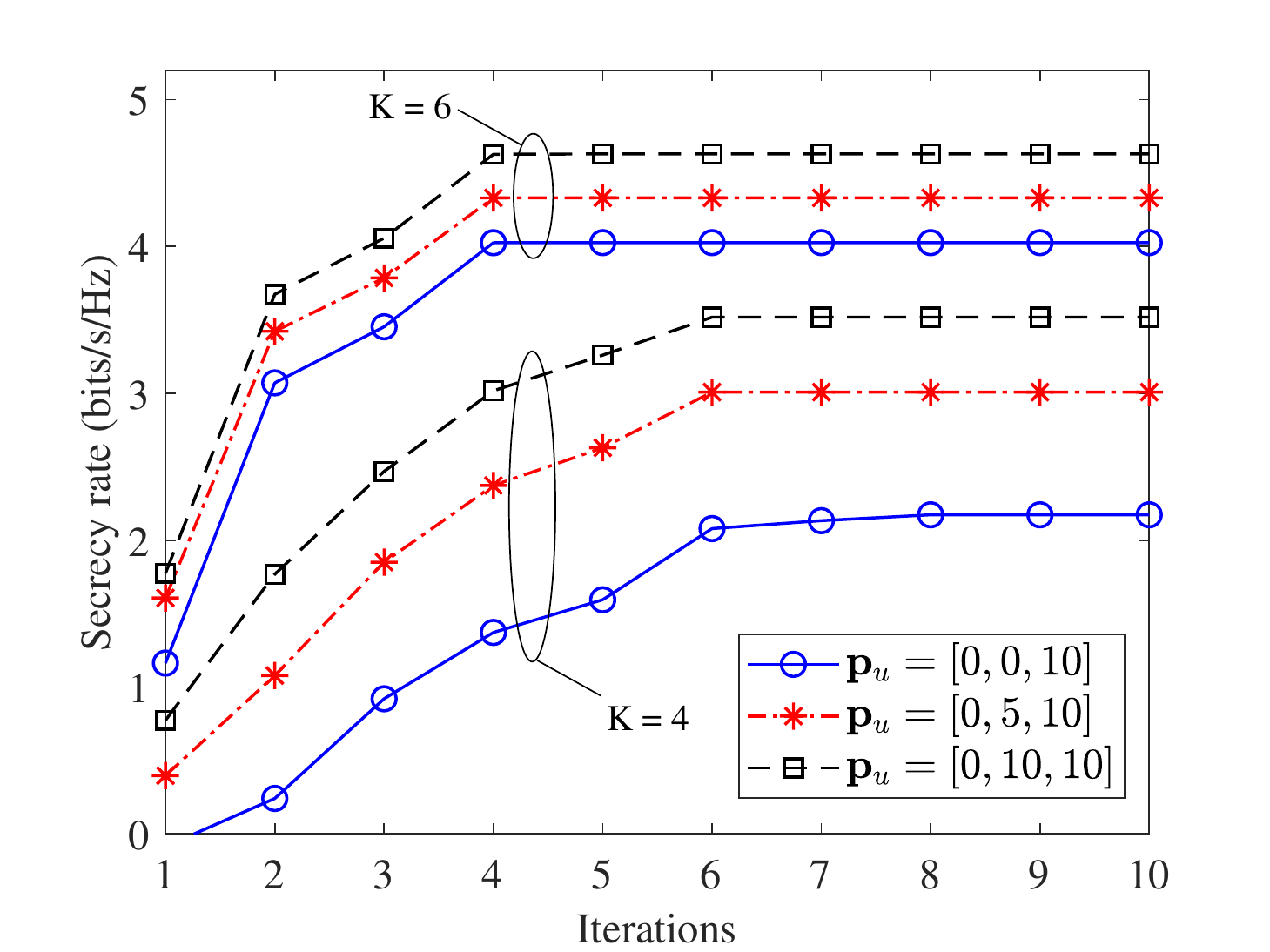}
	\caption{Iterative optimization of the power allocation $\bm{\psi}$ when the UAV is deployed at $\bm{p}_u = [0,0,10]^\mathrm{T}$, $\bm{p}_u = [0,5,10]^\mathrm{T}$, or $\bm{p}_u = [0, 10, 10]^{\mathrm{T}}$, assuming the number of transmit antennas $K = 4$ or $6$.}
	\label{figRateIter}
\end{figure}

Fig.~\ref{figRateTraj_singleEve} shows the iterative updates of the UAV locations until reaching the maximum distance away from its initial dispatch location, as well as the resultant secrecy rates in presence of a single eavesdropper. The initial location of the UAV is set to $\bm{p}_u^{(1)} = [0,0,10]^\mathrm{T}$, the location of the destination is set to $\bm{p}_0 = [20,0,0]^\mathrm{T}$, while the location of the eavesdropper is $\bm{p}_1 = [2, 0, 0]^{\mathrm{T}}$ in Fig.~\ref{figRateTraj_singleEve}~(a) and $\bm{p}_1 = [4,0,0]^{\mathrm{T}}$ in Fig.~\ref{figRateTraj_singleEve}~(b), respectively. As illustrated in the upper sub-figures, for all the cases associated with $K = 2$, $4$, and $6$, by solving the location update subproblem (\ref{eqRsec_pb_sca}), the UAV is capable of avoiding the eavesdropper, while keeping the secrecy rate non-decreasing compared to the previous updates, as shown in the lower sub-figures. When $K = 6$ and the eavesdropper is 2 meters away from the initial location of the UAV, the UAV adjusts its position upwards between the 10-th to 20-th iterations, where the secrecy rates remain approximately constant. When the eavesdropper is 4 meters away from the initial location, the number of iterations required is between 35 to 45. At the optimized locations, compared to the initial location, the secrecy transmissions are indeed facilitated with a positive secrecy rate when $K = 2$ and 4, while they are approximately doubled when $K = 6$.

\begin{figure}[t!]
	\centering
	\subfigure[]{\includegraphics[width=2.8in]{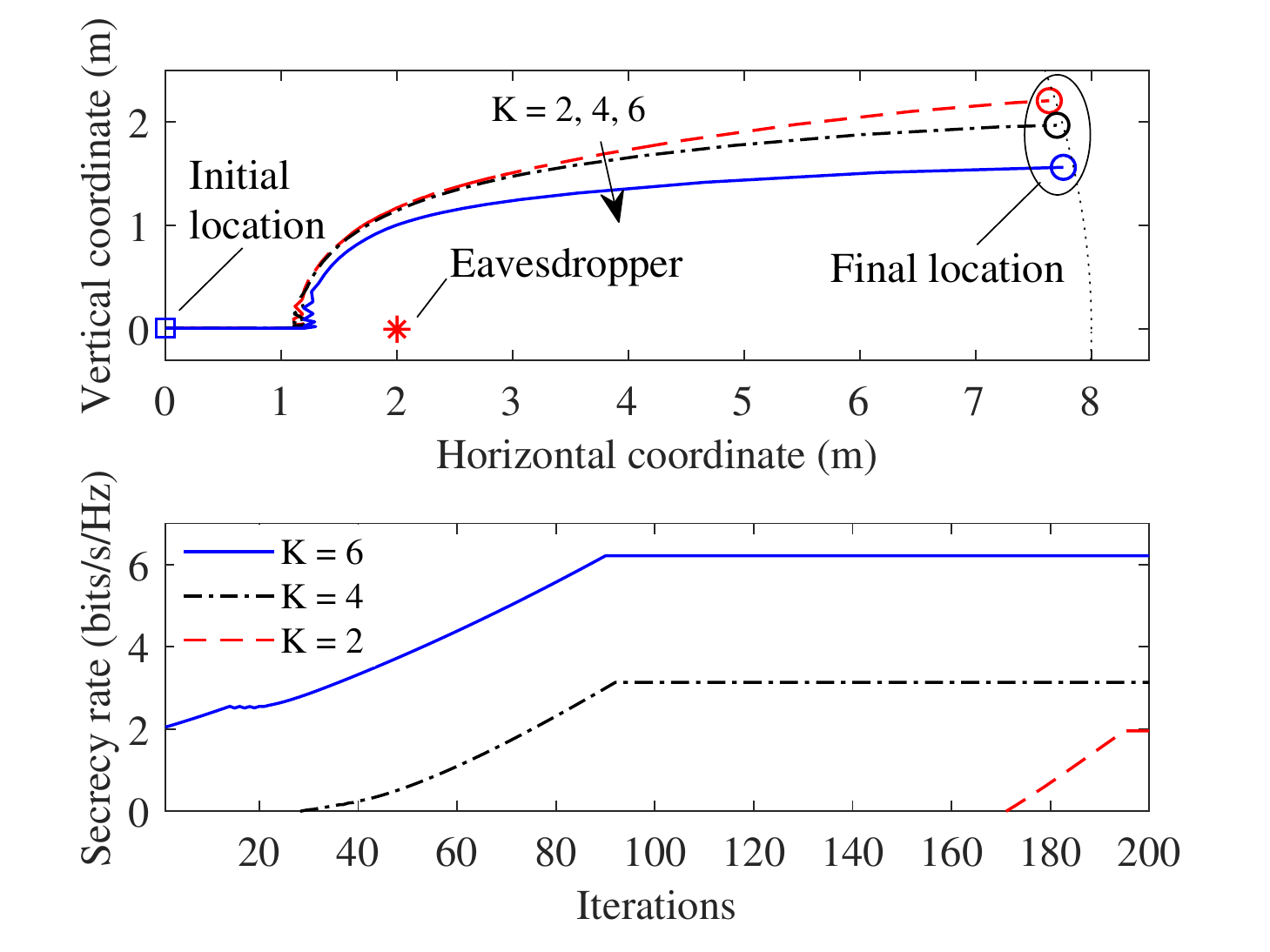}}
	\subfigure[]{\includegraphics[width=2.8in]{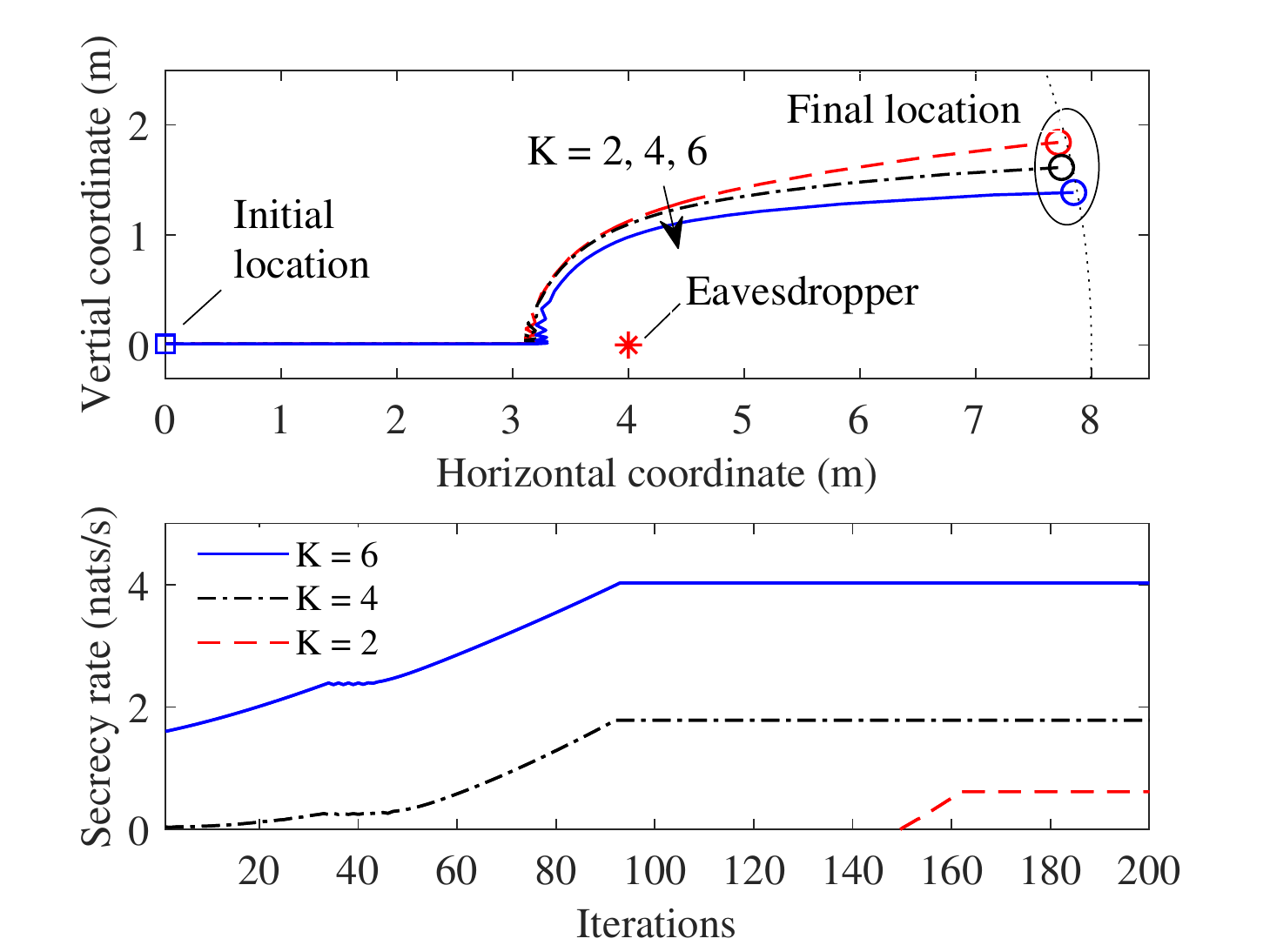}}
	\caption{Trajectory and secrecy rate iterations in presence of a single eavesdropper, assuming the initial location of the UAV as $\bm{p}_u^{(1)} = [0, 0, 10]^{\mathrm{T}}$ and the location of the destination as $\bm{p}_0 = [20, 0, 0]^\mathrm{T}$. The maximum displacement of the UAV is 8 meters. (a) Location of the eavesdropper is $[2, 0, 0]^{\mathrm{T}}$; (b) Location of the eavesdropper is $[4,0,0]^\mathrm{T}$.}
	\label{figRateTraj_singleEve}
\end{figure}

Fig.~\ref{figMultiEve_ed} shows the location sequences and the corresponding secrecy rates optimized by the proposed alternating algorithm, when there are multiple eavesdroppers around the UAV. The optimized results are also compared with those obtained via exhaustive searches~(ES) over all possible combinations of the power allocations and the deployment locations. Similarly to the settings in Fig.~\ref{figRateTraj_singleEve}, the initial dispatch location of the UAV and the location of the destination are set to $\bm{p}_u^{(1)} = [0,0,10]^\mathrm{T}$ and $\bm{p}_0 = [20,0,0]^\mathrm{T}$, respectively. The eavesdroppers are arranged on a $5\times 5$ grid centered at the UAV, while the distance between the adjacent eavesdroppers is 2 meters. It is clear from Fig.~\ref{figMultiEve_ed}~(a) that by applying Algorithm~2, the UAV becomes capable of passing through the grid of eavesdroppers and eventually reaching the maximum displacement distance from the initial dispatch location. Moreover, although the proposed alternating optimization method can only obtain the sub-optimal result, the optimized locations are very close to the ES locations, especially in the case of $K = 2$. The secrecy rates at the initial, the optimized, and the ES locations are compared in Fig.~\ref{figMultiEve_ed}~(b), where the UAV having $K = 6$ antennas achieves more than a factor 5 secrecy rate improvement after the joint optimization of the transmitter and the location. Additionally, it is concluded that the proposed method approaches the optimal secrecy rate.

\begin{figure}[htbp]
		\centering
		\subfigure[]{\includegraphics[width=2.8in]{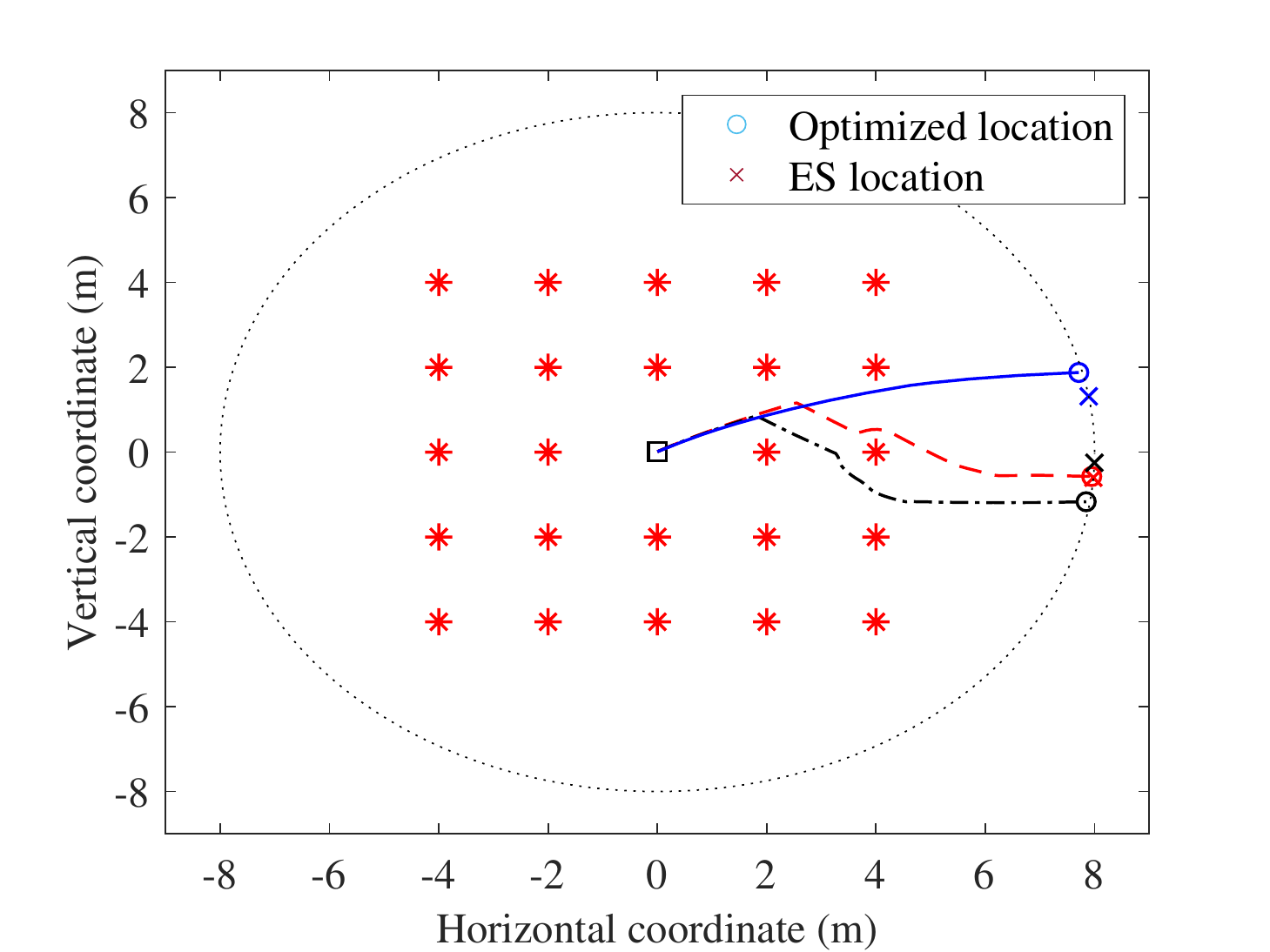}}
		\subfigure[]{\includegraphics[width=2.8in]{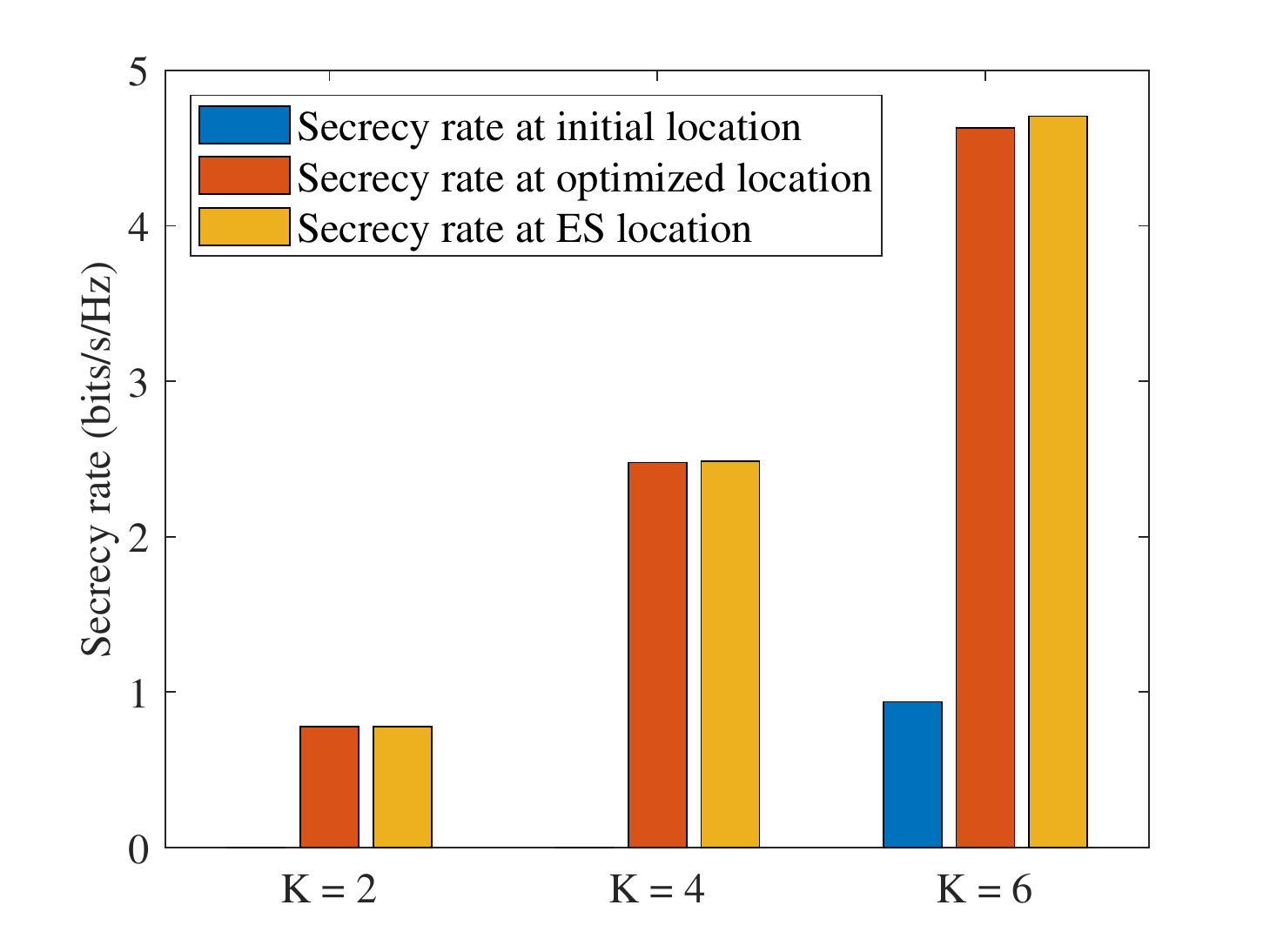}}
		\caption{Trajectory and secrecy rate iterations in presence of multiple eavesdroppers, assuming the initial location of the UAV as $\bm{p}_u^{(1)} = [0, 0, 10]^{\mathrm{T}}$ and the location of the destination as $\bm{p}_0 = [20, 0, 0]^\mathrm{T}$. The maximum displacement of the UAV is 8 meters. (a) Trajectories and the searched location of UAV with $K = 2$, $K = 4$, and $K = 6$; (b) Secrecy rate at the initial, optimized and the ES locations.}\label{figMultiEve_ed}
	\end{figure}

Fig.~\ref{figMultiEveCDF_ed} shows the ECDFs of the instantaneous secrecy rate in the presence of multiple eavesdroppers around the UAV. It is evaluated under the random fading of the propagation channels by using the UAV location and the power allocation obtained by the proposed alternating optimization method. The ECDFs are also compared with those achieved by the ES. Other system configurations show similar trends to those in Fig.~\ref{figRateTraj_singleEve}. Results show that the proposed method achieves similar overall distribution of the secrecy rate as those achieved by the ES. When $K = 4$, the ECDF of the instantaneous secrecy rate obtained by the two approaches nearly coincide. Moreover, it is observed that the fluctuation of the instantaneous secrecy rate caused by the random fading of the propagation channel ranges from $-2$ to $2$ bits/s/Hz around the average secrecy rate. In addition, the secrecy outage can be completely eliminated by the proposed alternating optimization under the antenna configurations of $K = 4$ and $K = 6$.

\begin{figure}[htbp]
		\centering
		{\includegraphics[width=2.8in]{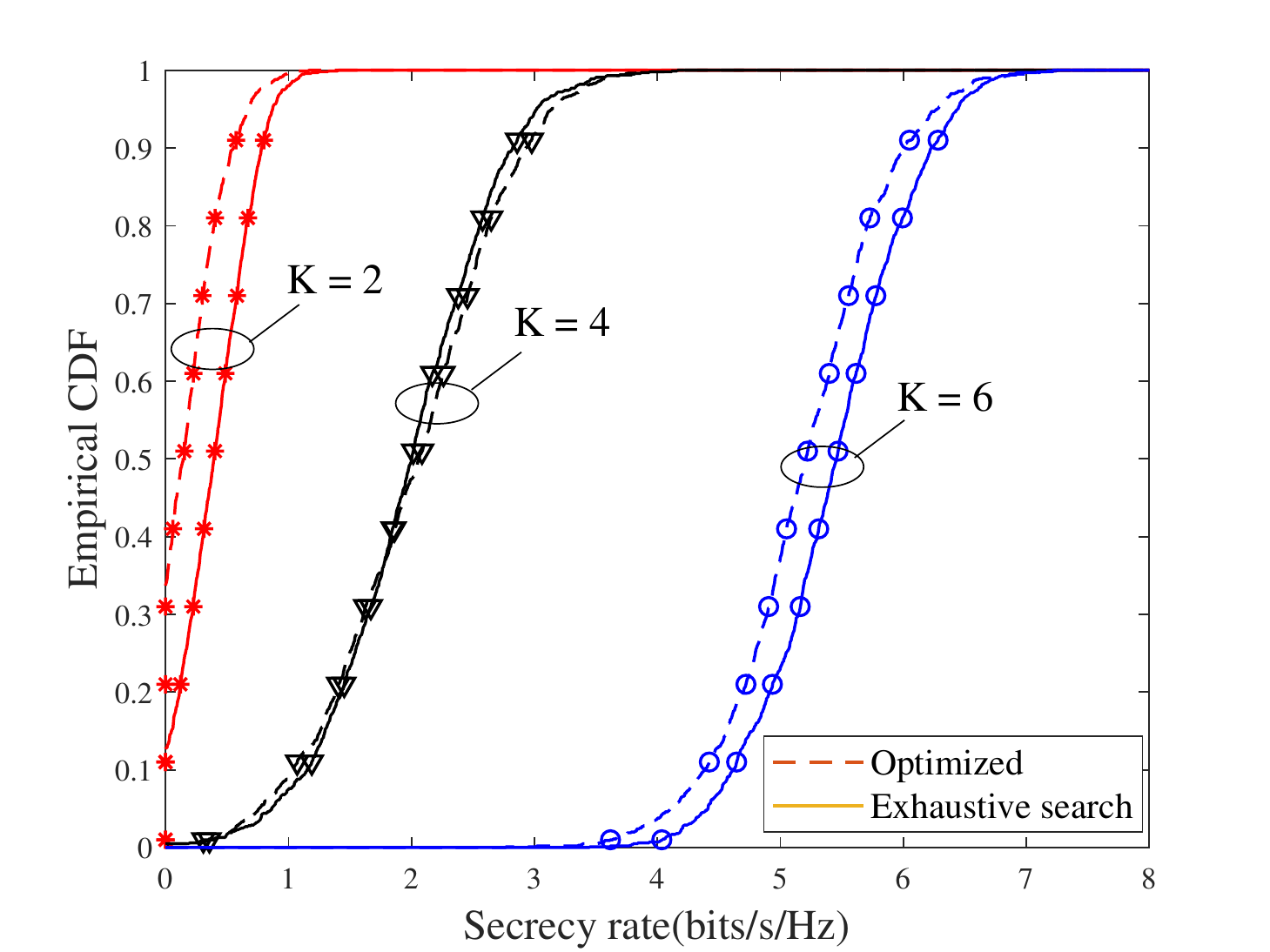}}
		\caption{Empirical cumulative distribution function of instantaneous secrecy rate achieved by the optimized method and by the ES method when there are multiple eavesdroppers around the UAV, assuming the initial location of the UAV as $\bm{p}_u^{(1)} = [0, 0, 10]^{\mathrm{T}}$ and the location of the destination as $\bm{p}_0 = [20, 0, 0]^\mathrm{T}$. The maximum displacement of the UAV is 8 meters. The number of transmit antennas K = 2, 4, 6.}\label{figMultiEveCDF_ed}
\end{figure}

\section{Conclusions}\label{secConclude}

Compared to the terrestrial network establishment, UAV-mounted network nodes have the advantage of flexible and prompt deployment, as well as improved channel conditions. While being readily relocated for security performance improvement, the laser-charging UAVs further avoid frequent recharges and service interruptions. In this paper, we exploited both the on-board antenna array and the maneuverability of the UAVs in secure UAV-to-ground communications, where the multi-antenna transceiver and the UAV deployment are jointly designed while satisfying the secrecy constraint. For this setup, we  first obtained a closed-form expression of the secrecy rate of the UAV-to-ground MIMO wiretap channels by using random matrix theory, which facilitates systematic transceiver and location optimization. Then, an alternating optimization procedure was formulated for iteratively updating the transmit vector and the UAV location. The updates of both optimization variables can be arranged by optimizing the affine representations of the secrecy rate, which are convex and can therefore be solved by standard optimization toolkits. Our results showed that, by using the proposed alternating optimization, the secrecy rate attained increases monotonically along the optimized trajectory between the initial dispatch location and the final deployment location. In particular, when the UAV has the same number of antennas as the eavesdroppers, secure communication with positive secrecy rate can be achieved at the optimized location. When the UAV has more antennas, the proposed algorithm attains substantial secrecy improvements. Thus, the location optimization together with transceiver optimization is of significant importance, which allows the UAV to avoid secrecy outages.

\appendices
\section{Proof of Proposition~\ref{propRtau}}\label{appxRtau}
By denoting $\mb{\Gamma} = \bar{\gamma}/z\mb{\Psi}$ and $\gamma_i = \bar{\gamma}/z\,\psi_i$, $1\le i\le r$, $R_U(\bm{\psi},\bm{p}_u)$ in (\ref{eqRtau_appx}) can be calculated by the following matrix integral:
\begin{align}
	\log \frac{\int \left|\mb{I}_K + \mb{X}^\dagger\mb{X}\mb{T}\mb{\Gamma}\mb{T}^\dagger\right| e^{-\mr{Tr}\left[\left(\mb{X}-\bar{\mb{H}}_{d}\right)\left(\mb{X}-\bar{\mb{H}}_{d}\right)^\dagger\right]} \mr{d}\mb{T}\mr{d}\mb{X}}{\int_{\mc{M}_{N,K}} e^{-\mr{Tr}\left[\left(\mb{X}-\bar{\mb{H}}_d\right)\left(\mb{X}-\bar{\mb{H}}_d\right)^\dagger\right]} \mr{d}\mb{X} },\label{eqR_appx1}
\end{align}
where the numerator of (\ref{eqR_appx1}) integrates over $\mb{X}\in\mc{M}_{N,K}$, the space of $N\times K$ complex matrices, and over $\mb{T}\in\mc{U}_K$, the normalized Haar measure on the unitary group. The integrand $\exp\left[-\mr{Tr}\left\{\left(\mb{X}-\bar{\mb{H}}_d\right)\left(\mb{X}-\bar{\mb{H}}_d\right)^\dagger\right\}\right]\mathrm{d}\mb{X}$ defines the probability measure of $\mb{X}$. The denominator of (\ref{eqR_appx1}) normalizes the right-hand-side of (\ref{eqR_appx1}).

\textbf{The case when $N\ge K$}: Denote the SVD of $\mb{X}$ in (\ref{eqR_appx1}) as $\mb{X} = \mb{U}\mb{\Lambda}^{1/2}\mb{V}^\dagger$, where $\mb{\Lambda} = \mr{diag}(\bm{\lambda})$ and $\bm{\lambda} = [\lambda_1,\ldots,\lambda_K]^{\mr{T}}$ are the eigenvalues of $\mb{X}^\dagger\mb{X}$. When $N\ge K$, we can use change-of-variables in terms of $\mb{U}$, $\mb{\Lambda}$, and $\mb{V}$, to arrive at $\mr{d}\mb{X} = \Delta_K(\bm{\lambda})^2\prod_{i=1}^K \lambda_i^{N-K}\mr{d}\mb{U}\mr{d}\mb{\bm{\lambda}}\mr{d}\mb{V}$, where $\Delta_K(\bm{\lambda}) = |\mb{V}_K(\bm{\lambda})|$ denotes the determinant of the Vandermonde matrix. Substituting $\mb{X} = \mb{U}\mb{\Lambda}^{1/2}\mb{V}^\dagger$ into (\ref{eqR_appx1}), $R_U(\bm{\psi},\bm{p}_u)$ can be rewritten as
\begin{align}
    \log\frac{\int_{[0,\infty)^K} \mc{I}_1(\mb{\Lambda},\mb{\Gamma}) \mc{I}_2(\mb{\Lambda}) \Delta_K(\bm{\lambda})^2\prod_{i=1}^K \lambda_i^{N-K} e^{-\lambda_i}  \mr{d}\bm{\lambda}}   {\int_{[0,\infty)^K} \mc{I}_2(\mb{\Lambda}) \Delta_K(\bm{\Lambda})^2\prod_{i=1}^K \lambda_i^{N-K} e^{-\lambda_i} \mr{d}\bm{\lambda}},\label{eqR_appx2}
\end{align}
where $\mc{I}_1(\mb{\Lambda},\mb{\Gamma}) = \int \left|\mb{I} + \mb{\Lambda}\mb{T}\mb{\Gamma}\mb{T}^\dagger\right| \mr{d}\mb{T}$ and $\mc{I}_2(\mb{\Lambda}) = \int e^{\mr{Tr}\left\{\bar{\mb{H}}_d^\dagger\mb{U}\mb{\Lambda}^{1/2}\mb{V}^\dagger\right\}+\mr{Tr}\left\{\mb{V}\mb{\Lambda}^{\dagger/2}\mb{U}^\dagger\bar{\mb{H}}_d\right\}} \mr{d}\mb{V}  \mr{d}\mb{U}$. Since $\gamma_1\ge\cdots\ge\gamma_r>0$ and $\gamma_{r+1}=\cdots\gamma_K=0$, according to \cite[Eq. (38)]{ZhengTCOM2019}, the integral $\mc{I}_1(\mb{\Lambda},\mb{\Gamma})$ is given by
\begin{align}
    \mc{I}_1(\mb{\Lambda},\mb{\Gamma}) &= \prod_{j=K-r}^{K-1}\!\!\frac{\Gamma(K+1-j)\Gamma(j+1)}{\Gamma(K+1)}\nonumber\\
    &\qquad\times\frac{\mc{D}_{1}^{(1)}(\bm{\lambda})}{\Delta_K(\bm{\lambda})\Delta_r(\bm{\gamma})\prod_{i=1}^r\gamma_i^{K-r}},\label{eqI1}
\end{align}
where $\mc{D}_{1}^{(1)}(\bm{\lambda}) = \left|\left\{\lambda_i^{j-1}\right\}_{K\times(K-r)} \left\{(1+\lambda_i\gamma_j)^{K}\right\}_{K\times r}\right|$.

To obtain $\mc{I}_2(\mb{\Gamma})$, we first assume that the matrix $\bar{\mathbf{H}}_d^\dagger\bar{\mathbf{H}}_d$ is of full-rank, i.e., $\omega_i>0$, $1\le i\le K$. The general expression of $\mc{I}_2(\mb{\Gamma})$ is then obtained by setting the corresponding $\omega_i$ to zero. According to \cite[Eq. (24)]{ChinthaJMP2008}, the expression of $\mc{I}_2(\mb{\Lambda})$ is given by
\begin{align}
    \mc{I}_2(\mb{\Lambda}) = & \frac{\prod_{i=1}^K (N-i)!(K-i)!}{\Delta_K(\bm{\lambda})\Delta_q(\bm{\omega})}\nonumber\\
    &\qquad \times \left|\left\{\frac{I_{N-K}(2\sqrt{\omega_j\lambda_i})}{(\omega_j\lambda_i)^{(N-K)/2}}\right\}_{K\times K}\right|,\label{eqI2}
\end{align}
where $I_{N-K}(x) = \sum_{k=0}^\infty\frac{1}{\Gamma(N-K+1+k)k!}(\frac{x}{2})^{2k+N-K}$ denotes the modified Bessel function. If the rank of $\bar{\mb{H}}_d^\dagger\bar{\mb{H}}_d$ is $0<q<K$, $\mc{I}_2(\mb{\Lambda})$ is obtained by applying \cite[Lemma 1]{ZhengTCOM2019} to (\ref{eqI2}), which drives the $K-q$ eigenvalues $\omega_{q+1},\ldots,\omega_K$ approaching zero. The corresponding $\mc{I}_2(\mb{\Lambda})$ is then proportional to
\begin{align}
\mc{I}_2(\mb{\Lambda}) \sim \frac{\mc{D}_{2}^{(1)}(\bm{\lambda})}{\Delta_K(\bm{\lambda})\Delta_q(\bm{\omega})\prod_{i=1}^q\omega_i^{K-q}},\label{eqI2_D21}
\end{align}
where $\mc{D}_{2}^{(1)}(\bm{\lambda})$ is given by
\begin{align*}
	 \left|\left\{\lambda_i^{j-1}\right\}_{K\times(K-q)} \left\{ \pFq{0}{1}{-}{N-K+1}{\omega_j\lambda_i} \right\}_{K\times q}\right|.
\end{align*}

Substituting $\mc{I}_1(\mb{\Lambda},\mb{\Gamma})$ and $\mc{I}_2(\mb{\Lambda})$ into (\ref{eqR_appx2}), $R_U(\bm{\psi},\bm{p}_u)$ becomes
\begin{align}
	R_U(\bm{\psi},\bm{p}_u) &= \log\frac{\mc{F}_{1}^{(1)}}{\mc{F}_{2}^{(1)}} +\log\frac{\prod_{i=1}^r\gamma_i^{r-K}}{\Delta_r(\bm{\gamma})} \nonumber \\
&\qquad + \sum_{j=K-r}^{K-1}\log\frac{\Gamma(K+1-j)\Gamma(j+1)}{\Gamma(K+1)},\label{eqR_appx3}
\end{align}
where $\mc{F}_{1}^{(1)} = \int \mc{D}_{1}^{(1)}(\bm{\lambda}) \mc{D}_{2}^{(1)}(\bm{\lambda}) \prod_{i=1}^K \lambda_i^{N-K}e^{-\lambda_i} \mr{d}\bm{\lambda}$ and $\mc{F}_{2}^{(1)} = \int \Delta_K(\bm{\lambda}) \mc{D}_{2}^{(1)}(\bm{\lambda}) \prod_{i=1}^K \lambda_i^{N-K}e^{-\lambda_i} \mr{d}\bm{\lambda}$. Applying the generalized Andr\'{e}ief integral \cite[Lemma 2]{ZhengTCOM2019}, $\mc{F}_{1}^{(1)}$ is calculated as
%_{[0,\infty)^K}
\begin{align}
    K!\left|
        \begin{array}{ll}
            \left\{ a_{j,i} \right\}_{(K-q)\times (K-r)} &
            \left\{ \mc{J}_i^{(1)}(\gamma_j)\right\}_{(K-q)\times r}\\
            \left\{ b_{j,i}  \right\}_{q\times (K-r)} &
        \left\{ \mc{K}_i^{(1)}(\gamma_j)\right\}_{q\times r}
        \end{array}
        \right|,\label{eqR_num1}
\end{align}
where $\mc{J}_i^{(1)}(\gamma_j)$ and $\mc{K}_i^{(1)}(\gamma_j)$ are given by
\begin{align}
\mc{J}_i^{(1)}(\gamma_j) &= \int_0^\infty \lambda^{N-K+i-1}e^{-\lambda}(1+\gamma_j\lambda)^{K}\mr{d}\lambda \nonumber \\
&= \sum_{n=0}^K {K\choose n} a_{n+1,i} \gamma_j^n,\\
\mc{K}_i^{(1)}(\gamma_j) &= \int_0^\infty \frac{\lambda^{N-K}e^{-\lambda}}{(1+\gamma_j\lambda)^{-K}} \pFq{0}{1}{-}{N-K+1}{\omega_i\lambda} \mr{d}\lambda \nonumber \\
&= \sum_{n=0}^K {K\choose n} b_{n+1,i} \gamma_j^n.
\end{align}

Similarly, the integral $\mc{F}_{2}^{(1)}$ in (\ref{eqR_appx3}) is calculated as $\mc{F}_{2}^{(1)} = K!\left|
    \begin{array}{ll}
    	\left\{ a_{i,j} \right\}_{K\times (K-q)} &
    	\left\{ b_{i,j} \right\}_{K\times q}
    \end{array}
    \right|$. Substituting $\mc{F}_{1}^{(1)}$ into (\ref{eqR_appx3}) and sequentially absorbing the $(\prod_{i=1}^r\gamma_i)^{r-K}$ term in (\ref{eqR_appx3}) into the determinant of $\mc{F}_{1}^{(1)}$, the constant terms and the first $(K-r)$ lower-power terms of the polynomials $\mc{J}_i^{(1)}(\gamma_j)$ and $\mc{K}_i^{(1)}(\gamma_j)$ are canceled by the corresponding $a_{i,j}$ and $b_{i,j}$ in the left sub-matrices of (\ref{eqR_num1}) due to the multilinearity of the matrix determinant. Finally, substituting $\mc{F}_{2}^{(1)}$ into results in (\ref{eqR_appx3}) attains (\ref{eqRtau_4}) when $N\ge K$.

\textbf{The case when $K>N$}: Following similar procedures as (\ref{eqR_appx1})-(\ref{eqR_appx2}), $R_U(\bm{\psi},\bm{p}_u)$ is obtained as
\begin{align}
	&R_U(\bm{\psi},\bm{p}_u) =\nonumber \\
    & \log\frac{\int_{[0,\infty)^N} \mc{I}_1(\mb{\Lambda},\mb{\Gamma}) \mc{I}_2(\mb{\Lambda}) \Delta_N(\bm{\lambda})^2\prod_{i=1}^N \lambda_i^{K-N} e^{-\lambda_i}  \mr{d}\bm{\lambda}}   {\int_{[0,\infty)^N} \mc{I}_2(\mb{\Lambda}) \Delta_N(\bm{\lambda})^2\prod_{i=1}^N \lambda_i^{K-N} e^{-\lambda_i} \mr{d}\bm{\lambda}}.\label{eqPhi5}
\end{align}
According to \cite{ZhengTCOM2019}, when $K>N\ge r$, we obtain
\begin{align}
&\mc{I}_1(\mb{\Lambda},\mb{\Gamma}) =\nonumber \\
& \prod_{j=K-r}^{K-1}\frac{\Gamma(K+1-j)\Gamma(j+1)}{\Gamma(N+1)\Gamma(K-N+1)}\frac{\prod_{j=1}^r\gamma_j^{r-N}}{\Delta_N(\bm{\lambda})\Delta_r(\bm{\gamma})}\mc{D}_{1}^{(2)}(\bm{\lambda}),\label{eqI1_2}
\end{align}
where $\mc{D}_{1}^{(2)}(\bm{\lambda})$ is given by
\begin{align*}
	\left|
	\begin{array}{cc}
		\left\{\lambda_i^{j-1}\right\}_{N\times (N-r)} & \left\{\pFq{2}{1}{1,-N}{K-N+1}{-\gamma_j \lambda}\right\}_{N\times r}
	\end{array}
	\right|.
\end{align*}
Given $\mc{I}_1(\mb{\Lambda},\mb{\Gamma})$ in (\ref{eqI1_2}) and $\mc{I}_2(\mb{\Lambda})$ in (\ref{eqI2_D21}) while interchanging $N$ and $K$, $R_U(\bm{\psi},\bm{p}_u)$ in (\ref{eqPhi5}) is obtained for $K>N\ge r$ as
%When $K>N$, $\mc{I}_2(\bm{\lambda})$ is also proportional to (\ref{eqI2_D21}), while interchanging $N$ and $K$, i.e.,
%\begin{align}
%    \mc{I}_2(\mb{\Lambda})\sim \frac{\mc{D}_{2,2}(\bm{\lambda})}{\Delta_N(\bm{\lambda})\Delta_q(\bm{\omega})\prod_{i=1}^q\omega_i^{N-q}},\label{eqI2_2}
%\end{align}
%where
%\begin{align}
%    \mc{D}_{2,2}(\bm{\lambda}) = \left|
%        \begin{array}{cc}
%            \left\{\lambda_i^{j-1}\right\}_{N\times (N-q)} & \left\{{}_0F_1(K-N+1,\omega_j\lambda_i)\right\}_{N\times q}
%        \end{array}
%        \right|.
%\end{align}
\begin{align}
    R_U(\bm{\psi},\bm{p}_u) =& \log\frac{\mc{F}_{1}^{(2)}}{\mc{F}_{2}^{(2)}} +\log\frac{\prod_{j=1}^r\gamma_j^{r-N}}{\Delta_r(\bm{\gamma})}\nonumber \\
    & + \sum_{j=K-r}^{K-1} \log\frac{\Gamma(K+1-j)\Gamma(j+1)}{\Gamma(N+1)\Gamma(K-N+1)},\label{eqR_appx4}
\end{align}
where $\mc{F}_{1}^{(2)} = \int \mc{D}_{1}^{(2)}(\bm{\lambda})\mc{D}_{2}^{(2)}(\bm{\lambda}) \prod_{i=1}^N \lambda_i^{K-N}e^{-\lambda_i} \mr{d}\bm{\lambda}$ and $\mc{F}_{2}^{(2)} = \int \mc{D}_{2}^{(2)}(\bm{\lambda}) \Delta_N(\bm{\lambda}) \prod_{i=1}^N \lambda_i^{K-N} e^{-\lambda_i} \mr{d}\bm{\lambda}$.

Similar to (\ref{eqR_num1}), the integral $\mc{F}_{1}^{(2)}$ in (\ref{eqR_appx4}) can be calculated as
\begin{align}
    N!\left|
        \begin{array}{ll}
            \left\{ a_{j,i} \right\}_{(N-q)\times (N-r)} &
            \left\{ \mc{J}_i^{(2)}(\gamma_j) \right\}_{(N-q)\times r} \\
            \left\{ b_{j,i}  \right\}_{q\times (N-r)} &
        \left\{ \mc{K}_i^{(2)}(\gamma_j) \right\}_{q\times r}
        \end{array}
        \right|,\label{eqR_num2}
\end{align}
where $\mc{J}_i^{(2)}(\gamma_j)$ and $\mc{K}_i^{(2)}(\gamma_j)$ are given by
\begin{align}
\mc{J}_i^{(2)}(\gamma_j) &= \int_0^\infty \lambda^{K-N+i-1} e^{-\lambda} \pFq{2}{1}{1,-N}{K-N+1}{-\gamma_j \lambda} \mr{d}\lambda \nonumber \\
&= \sum_{n = 0}^M {K\choose M-n} a_{n+1,i} \gamma_j^n,\label{eqJ_2}
\end{align}
\begin{align}
\mc{K}_i^{(2)}(\gamma_j) &= \int_0^\infty \frac{e^{-\lambda}}{\lambda^{N-K}} \pFq{0}{1}{-}{K-N+1}{\omega_i\lambda} \nonumber \\
&\qquad\qquad\  \times \pFq{2}{1}{1,-N}{K-N+1}{-\gamma_j \lambda} \mr{d}\lambda \nonumber \\
& = \sum_{n = 0}^M {K\choose M-n} b_{n+1,i}\gamma_j^{n}.\label{eqK_2}
\end{align}

The integral $\mc{F}_{2}^{(2)}$ in (\ref{eqR_appx4}) can be calculated as $\mc{F}_{2}^{(2)} = N!\left|
        \begin{array}{ll}
            \left\{ a_{i,j} \right\}_{(N-q)\times N} &
            \left\{ b_{i,j} \right\}_{q\times N}
        \end{array}
        \right|$. Then, we follow similar procedures by absorbing the $\prod_{j=1}^r\gamma_j^{r-N}$ term in (\ref{eqR_appx4}) into the determinant of $\mc{F}_{1}^{(2)}$, which leads to (\ref{eqRtau_4}), when $K>N\ge r$.

When $K>r>N$, we obtain
\begin{align}
	&\mc{I}_1(\mb{\Lambda},\mb{\Gamma})=\nonumber\\
	&\times\prod_{j=r-N}^{r-1}\frac{\Gamma(r+1-j)\Gamma(K+j-r+1)}{\Gamma(N+1)\Gamma(K-N+1)}\frac{\mc{D}_{1}^{(3)}(\bm{\lambda})}{\Delta_N(\bm{\lambda})\Delta_r(\bm{\gamma})},\label{eqI1_3}
\end{align}
where $\mc{D}_{1}^{(3)}(\bm{\lambda})$ is given by
\begin{align*}
	\left|
	\begin{array}{cc}
		 \left\{\gamma_i^{j-1}\right\}_{r\times (r-N)}  & \left\{\gamma_i^{r-N}\pFq{2}{1}{1,-N}{K-N+1}{-\lambda_j\gamma_i} \right\}_{r\times N}
	\end{array}
	\right|.
\end{align*}
By substituting (\ref{eqI1_3}) and $\mc{I}_2(\mb{\Lambda})$ into (\ref{eqPhi5}), $R_U(\bm{\psi},\bm{p}_u)$ is obtained as
\!\!\begin{align}
    &R_U(\bm{\psi},\bm{p}_u) = \log\frac{\mc{F}_1^{(3)}}{\mc{F}_2^{(2)}} - \log\Delta_r(\bm{\gamma}) \nonumber \\
    &\qquad + \sum_{j=r-N}^{r-1}\log\frac{\Gamma(r+1-j)\Gamma(K+j-r+1)}{\Gamma(N+1)\Gamma(K-N+1)} ,\label{eqR_appx5}
\end{align}
where $\mc{F}_1^{(3)} = \int \mc{D}_{1}^{(3)}(\bm{\lambda})\mc{D}_{2}^{(2)}(\bm{\lambda})\prod_{i=1}^N \lambda_i^{K-N} e^{-\lambda_i}\mr{d}\bm{\lambda}$. The integral $\mc{F}_{1}^{(3)}$ in (\ref{eqR_appx5}) is calculated as
\begin{align*}
	N! \left|
	\begin{array}{ccc}
		\left\{ \gamma_j^{i-1} \right\}_{(r-N)\times r} \\
		\left\{ \gamma_j^{r-N}\mc{J}_i^{(2)}(\gamma_j) \right\}_{(N-q)\times r} \\
		\left\{ \gamma_j^{r-N}\mc{K}_i^{(2)}(\gamma_j) \right\}_{q\times r}
	\end{array}
	\right|,
\end{align*}
where $\mc{J}_i^{(2)}(\gamma_j)$ and $\mc{K}_i^{(2)}(\gamma_j)$ are given by (\ref{eqJ_2}) and (\ref{eqK_2}), respectively. Finally, by substituting $\mc{F}_1^{(3)}$ and $\mc{F}_2^{(2)}$ into (\ref{eqR_appx5}), we obtain $R_U(\bm{\psi},\bm{p}_u)$ in (\ref{eqRtau_4}), when $K>r>N$.
% use section* for acknowledgment
%\section*{Acknowledgment}

%The authors would like to thank...

% Can use something like this to put references on a page
% by themselves when using endfloat and the captionsoff option.
\ifCLASSOPTIONcaptionsoff
  \newpage
\fi

% trigger a \newpage just before the given reference
% number - used to balance the columns on the last page
% adjust value as needed - may need to be readjusted if
% the document is modified later
%\IEEEtriggeratref{8}
% The "triggered" command can be changed if desired:
%\IEEEtriggercmd{\enlargethispage{-5in}}

% references section

% can use a bibliography generated by BibTeX as a .bbl file
% BibTeX documentation can be easily obtained at:
% http://mirror.ctan.org/biblio/bibtex/contrib/doc/
% The IEEEtran BibTeX style support page is at:
% http://www.michaelshell.org/tex/ieeetran/bibtex/
%\bibliographystyle{IEEEtran}
% argument is your BibTeX string definitions and bibliography database(s)
%\bibliography{IEEEabrv,../bib/paper}
%
% <OR> manually copy in the resultant .bbl file
% set second argument of \begin to the number of references
% (used to reserve space for the reference number labels box)

\bibliographystyle{IEEEtran}
\bibliography{IEEEabrv,Ref}

\end{document}